\Crefname{equation}{}{}
\Crefname{assumption}{Assumption}{Assumptions}
\Crefname{condition}{Condition}{Conditions}
\Crefname{claim}{Claim}{Claims}
\Crefname{line}{ln}{lns}
\Crefname{section}{\S}{\S}
\theoremstyle{plain}
\newtheorem{lemma}{Lemma}
\newtheorem{corollary}{Corollary}
\theoremstyle{definition}
\newtheorem{definition}{Definition}
\newtheorem{example}{Example}
\newtheorem{remark}{Remark}
\newtheorem{assumption}{Assumption}
\newcommand{\reals}{\mathbb{R}}
\newcommand{\constraints}{\mathcal{X}}
\newcommand{\resource}{B}
\newcommand{\request}{\gamma}
\newcommand{\requests}{\Gamma}
\newcommand{\reqset}{\mathcal{S}}
\newcommand{\distover}{\Delta}
\newcommand{\defeq}{\coloneqq}
\newcommand{\eqdef}{\eqqcolon}
\providecommand{\bigO}{\mathcal{O}}
\providecommand{\pathlength}{\mathrm{P}}
\providecommand{\ev}{\mathbb{E}}
\providecommand{\regret}{\mathsf{Regret}}
\providecommand{\tmax}{\hat{T}}
\newcommand{\fmax}{\bar{f}}
\newcommand{\bmax}{\bar{b}}
\newcommand{\bmin}{\underline{b}}
\providecommand{\nullaction}{\varnothing}
\providecommand{\proj}{\Pi}
\providecommand{\convexset}{\mathcal{K}}
\providecommand{\diameter}{D}
\providecommand{\gradientbound}{G}
\providecommand{\multiplier}{\mu}
\providecommand{\fdual}{f^*}
\providecommand{\dfunc}{\mathsf{D}}
\providecommand{\multmax}{\multiplier_{\max}}
\providecommand{\rolling}{\hat{\multiplier}}
\providecommand{\window}{\omega}
\providecommand{\subgrad}{g}
\providecommand{\avgconsumption}{\rho}
\providecommand{\mdmultiplier}{\phi}
\providecommand{\mdmultset}{\Phi}
\providecommand{\stepsize}{\eta}
\providecommand{\dist}{\mathcal{P}}
\providecommand{\avexdual}{\bar{\dfunc}}
\providecommand{\regretconsti}{C_1}
\providecommand{\regretconstii}{C_2}
\providecommand{\regretconstiii}{C_3}
\providecommand{\crconst}{\beta}
\providecommand{\crconsti}{\regretconsti}
\providecommand{\crconstii}{C_2^\cratio}
\providecommand{\crconstiii}{C_3^\cratio}
\providecommand{\xopt}{x^*}
\providecommand{\onlinecost}{w}
\providecommand{\stoptime}{{T_A}}
\providecommand{\tvoptmult}{\mdmultiplier^\star}
\providecommand{\rollingconst}{\hat{C}}
\providecommand{\egoffset}{\delta}
\newcommand{\cratio}{\alpha}
\providecommand{\sel}{\lambda}
\providecommand{\ts}{s}
\providecommand{\te}{e}
\providecommand{\adv}{\text{\scriptsize\textsf{ADV}}}
\providecommand{\rob}{\text{\scriptsize\textsf{ROB}}}
\providecommand{\alg}{\text{\scriptsize\textsf{ALG}}}
\newcommand{\opt}{\text{\scriptsize\textsf{OPT}}}
\newcommand{\maxrate}{u}
\newcommand{\minrate}{l}
\newcommand{\leading}{\beta}
\newcommand{\remaining}{\bar{B}}
\newcommand{\advremaining}{\bar{B}^\adv}
\newcommand{\algrew}{F}
\newcommand{\algdeg}{B}
\newcommand{\advrew}{F^\adv}
\newcommand{\advdeg}{B^\adv}
\newcommand{\timeremaining}{\bar{T}}
\newcommand{\dindex}{\tau}
\newcommand{\refappendix}[1]{Appendix~\ref{#1}}
\begin{document} 

\title{\LARGE \bf
Robust and learning-augmented algorithms for degradation-aware battery optimization
}

\author{
\authorblockN{Jack Umenberger$^{1}$ and Anna Osguthorpe Rasmussen$^{1,2}$}
\authorblockA{
$^{1}$Department of Engineering Science, University of Oxford, United Kingdom.\\
Email: {\tt\small \{jack.umenberger, anna.osguthorpe\}@eng.ox.ac.uk} \\
$^{2}$Brookfield Renewable North America, United States.\\
}
\thanks{}
}

\maketitle
\thispagestyle{empty}
\pagestyle{empty}
\begin{abstract}
This paper studies the problem of maximizing revenue from a grid-scale battery energy storage system, accounting for uncertain future electricity prices and the effect of degradation on battery lifetime. 
We formulate this task as an online resource allocation problem.
We propose an algorithm, based on online mirror descent, that is no-regret in the stochastic i.i.d. setting and attains finite asymptotic competitive ratio in the adversarial setting (robustness).
When untrusted advice about the opportunity cost of degradation is available, we propose a learning-augmented algorithm that performs well when the advice is accurate (consistency) while still retaining robustness properties when the advice is poor.
\end{abstract}

\section{Introduction}\label{sec:intro}

Grid-scale battery energy storage systems (BESS) face a key operational tradeoff: each charge-discharge cycle has the potential to generate revenue (e.g. from energy arbitrage or ancillary services), but also contributes to degradation, in the form of reduced capacity, increased internal resistance, and reduced round-trip efficiency \cite{collath2022aging}.  
Degradation typically decreases the short-term profitability of the battery (e.g. less capacity for arbitrage, more energy lost to internal resistance), and ultimately leads to retirement of the battery when it reaches end of life (usually between 65\% -- 80\% of its original capacity) \cite{wankmuller2017impact}.
The way in which a battery is operated has a significant impact on the rate of degradation, with high charge/discharge rates, extreme temperatures, and extended periods of time at high state of charge (SoC) all accelerating degradation \cite{maheshwari2020optimizing}.
Consequently, to maximize the revenue of a battery over its entire lifetime, it is essential to balance short-term revenue with the opportunity cost of degradation, i.e. the future revenue that will be lost because of the degradation incurred today.
Unfortunately, the opportunity cost of degradation depends on unknown future quantities, such as electricity prices and battery degradation characteristics.
These quantities are challenging to predict accurately.
Indeed, in the case of electricity prices, it is difficult to estimate even their statistical properties, especially years or decades into the future, due to the evolving nature of electricity grids and markets.
The importance and challenge of accounting for degradation when optimizing BESS operation is widely-appreciated; cf. \Cref{sec:related_work} for a review of the literature.
However, to the best of our knowledge, no prior work has characterized the performance of degradation-aware BESS optimization algorithms when the distributions over future electricity prices and degradation characteristics are unknown, or even adversarially chosen.

\subsection{Contributions}
This paper makes the following contributions:
\begin{enumerate}
    \item We formalize the problem of degradation-aware BESS operation as an online resource allocation problem, incorporating calendar aging (\Cref{ass:calendar}) which is non-standard in the online optimization literature.
    \item We propose an online algorithm (\Cref{alg:robust}) that is no-regret in the stochastic i.i.d. setting (\Cref{thm:regret}) and achieves a finite asymptotic competitive ratio (robustness) in the worst-case/adversarial setting (\Cref{thm:cr}).
    \item When untrusted blackbox advice (predictions) about the opportunity cost of degradation is available, we propose a learning-augmented algorithm (\Cref{alg:learning_augmented}) that achieves arbitrarily close-to-optimal performance when the advice is accurate (consistency), while retaining some of the robustness of \Cref{alg:robust} when the advice is inaccurate. Specifically, we show that \Cref{alg:learning_augmented} is the `most robust algorithm possible' (\Cref{claim:necessary}) that can guarantee total reward within a constant factor $1+\epsilon$ of the reward accrued by following the advice, for arbitrary user-specified $\epsilon>0$ (\Cref{claim:consistent}).
\end{enumerate}
We emphasize that the focus of this paper is on the design and analysis of algorithms for degradation-aware BESS operation. 
We make no contribution to the important topics of battery degradation modelling or electricity price forecasting.

\subsection{Organization}
The remainder of this paper is organized as follows. 
\Cref{sec:preliminaries} introduces the problem formulation and the online resource allocation problem.
\Cref{sec:related_work} reviews related work on both degradation-aware BESS optimization and online optimization, including learning-augmented algorithms.
\Cref{sec:robust} presents our robust algorithm and theoretical guarantees.
\Cref{sec:learning_augmented} presents our learning-augmented algorithm and theoretical guarantees.

\section{Preliminaries}\label{sec:preliminaries}

\subsection{Online resource allocation problem}
An online (single) resource allocation problem is a sequential decision-making problem defined over a finite time horizon of $T$ rounds.
The decision maker (henceforth, \emph{agent}) starts with a finite amount $\resource>0$ of a resource.
At the beginning of each round $t\in[T]$, the agent receives a request $\request_t=(f_t,b_t,\constraints_t)\in\reqset$ comprising a constraint set $\constraints_t\subset\reals^n$, reward function $f_t:\constraints_t\to \reals_+$, and resource consumption function $b_t:\constraints_t\to \reals_+$.
The agent chooses an action $x_t\in\constraints_t$, receives reward $f_t(x_t)$ and consumes $b_t(x_t)$ of the resource.
The goal is to maximize the cumulative reward, subject to the constraint that the total consumption does not exceed the initial quantity $\resource$ of the resource.
We emphasize that the problem is \emph{online} in the sense that the decision at time $t$ is made without knowledge of future requests, $\request_{\tau}$ for $\tau>t$.
The instance ends when either the resource is depleted or when all $T$ rounds have elapsed.
The following assumptions are standard in the literature:
\begin{assumption}
For all requests $\request=(f,b,\constraints)$, we have bounds $\fmax \geq f(x) \geq 0$ and $\bmax \geq b(x) \geq \bmin$ for all $x\in\constraints$, where $\fmax,\bmax,\bmin\geq0$ are known constants.
\end{assumption}

\subsection{Application to BESS optimization}
The online resource allocation problem can be used to model degradation-aware operation of a BESS.
In this application, the resource is the battery's state of health (SoH), which degrades over time as the battery is used.
In a resource allocation problem, the cumulative consumption constraint ${\sum}_{\dindex=1} b_\dindex(x_\dindex)\leq\resource$ is the only factor that explicitly couples decisions at different rounds $t$.
As such, when modelling BESS optimization, each round should correspond to a time period of no less than one day, as other state variables, most signficantly the state of charge (SoC) and battery temperature, will couple (and constrain) decisions within a day.
The total resource $\resource$ corresponds to the total allowable degradation over the battery's lifetime. 
In applications this is typically set to between $20\%$ and $30\%$ of the battery's initial capacity; see \Cref{sec:horizon} below for further discussion.
Each request $\request_t=(f_t,b_t,\constraints_t)$ therefore corresponds to the revenue, degradation, and operational constraints for day $t$. It is worth emphasizing that we make no assumptions about the convexity of these objects; for example, $\constraints_t$ may be mixed-integer, capturing discrete decisions such as whether or not to participate in certain markets.
To capture calendar aging in BESS, we make the following non-standard assumption, not typically made in the online resource allocation literature:
\begin{assumption}[Calendar aging]\label{ass:calendar}
    $b_t(x)\geq\bmin >0$ for all $t$ and $x\in\constraints_t\setminus\{\nullaction\}$, i.e. degradation is strictly positive for all feasible actions, except for a distinguished null action $\nullaction$.
    Moreover, $b_t(0)=\bmin$ for all $t$, i.e. if you `do nothing' you will incur calendar aging. 
\end{assumption}
\begin{assumption}\label{ass:nullaction}
    $\nullaction\in\constraints_t$ only for $t$ such that $\sum_{\dindex=1}^t b_\dindex(\nullaction) = \resource$, i.e. the null action can only be taken when the resource has been fully depleted.
\end{assumption}
\Cref{ass:nullaction} is a simple technicality that ensures that the optimization problem is feasible, despite the presence of calendar aging.
The existence of such a null action $\nullaction$ is standard; however, it is typically assumed that $\nullaction$ is available at all times, allowing the agent to `pass' on a request without incurring any degradation/consumption. This is not the case in the BESS setting.

\subsubsection{Time horizon}\label{sec:horizon}
An Online Resource Allocation problem has a finite number of rounds, $T$, meaning that the agent will observe $T$ requests $\request_1,\cdots,\request_T$. 
Calendar aging implies an upper bound on this time horizon: $T\leq\resource/\bmin\eqdef\tmax$.
However, in practice, $T$ is typically (significantly) smaller than $\tmax$. 
For example, a BESS developer typically secures financing for a limited time period, e.g. 5 years.
It is essential, therefore, that the BESS generates sufficient revenue within this 5 year period, even if the battery could -- if used gently -- last for 15 years. 
We define the quantity $\avgconsumption \defeq \resource/T$. 
For the purpose of theoretical analysis, $\avgconsumption$ is best thought of as a (constant) parameter of the problem instance: given $\avgconsumption$, the total resource $\resource$ scales with $T$ as $\resource=\avgconsumption\cdot T$.
As such, $\avgconsumption$ roughly corresponds to the `wealthiness' of the asset developer: larger $\avgconsumption$ implies that the battery can withstand more degradation over a given time horizon $T$. 
In practice, this is typically achieved through \emph{augmentation} (i.e. adding cells to the battery over time to compensate for capacity lost to degradation) \cite{shin2020optimal} or simply replacing aged cells. Oversizing, i.e. installing a battery with more capacity than needed, is another common strategy to extend battery life \cite{shin2020optimal}. However, oversizing aims to reduce degradation, i.e. decrease $b_t(x_t)$ for given $x_t$, by reducing depth of discharge and current rates (measured in C), whereas augmentation aims to increase $\resource$ without necessarily altering $b_t(x_t)$, and better captures our $\resource=\avgconsumption\cdot T$ model. 

\subsubsection{Limitations of the model} 
As stated earlier, the only state variable that couples decisions at different rounds $t$ is the cumulative resource consumption $\sum_{t=1}^T b_t(x_t)\leq \resource$.
In reality, BESS have other state variables, such a state of charge (SoC) and temperature, that constrain decisions across different rounds (days).
For example, the SoC at the start of day $t$ should equal the SoC at the end of day $t-1$.
This is no problem in practice, as $\constraints_t$ can be defined to enforce such constraints.
It does, however, mean that theoretical results must be interpreted carefully. 
Specifically, when comparing two algorithms (e.g. our online algorithm against the offline optimal), each algorithm should receive the same request $\request_t = (f_t,b_t,\constraints_t)$ at time $t$.
If the different algorithms have different SoCs at the end of day $t-1$, then they will receive different constraints $\constraints_t$ at time $t$, if these constraints enforce SoC continuity.
The simplest `solution' to this problem -- from an analysis perspective -- is to assume that the SoC at the end of each day is constrained some nominal (possibly time-varying) value (or small range of values), e.g. $50\%$, which is common practice in BESS operation.
Reformulating the problem with additional state variables (e.g. SoC and temperature) that couple decisions across rounds -- not only decisions within rounds (days) as in the present formulation -- would considerably complicate the analysis, though is an interesting direction for future work. 

Another apparent limitation of our formulation is that net present value (NPV) is not explicitly modelled through the discounting of future rewards. All our results, for the adversarial model, remain valid when future rewards are discounted. What our formulation does not allow is the exploitation of assumed knowledge of future interest rates, required for NPV calculations.
Given that interests rates are challenging to forecast,\footnote{Figure 1.3 of \cite{stevenson2019impact} shows that the US Federal Reserve consistently underestimated future interest rates from (at least) 2008 to 2016.} we do not consider this a significant limitation, and again emphasize that any such predictions -- accurate or otherwise -- can be incorporated via our learning-augmented algorithm of \Cref{sec:learning_augmented}.

\subsection{Offline solution and dual problem}
Given a sequence of requests $\requests=\lbrace\request_1,\ldots,\request_T\rbrace$,
the optimal solution of the resource allocation problem is denoted: 
\begin{equation}\label{eq:opt}
    \opt(\requests) = \max_{x_t\in\constraints_t} \sum_{t=1}^T f_t(x_t) \quad \text{s.t.} \quad \sum_{t=1}^T b_t(x_t) \leq \resource.
\end{equation}
This is referred to as the offline optimal solution, as it assumes all requests $\requests$ are known in advance.
Let $\multiplier\geq0$ be the dual variable associated with the resource constraint in \Cref{eq:opt}.
Defining 
\begin{equation}
    \fdual_t(\multiplier) = \max_{x\in\constraints_t} f_t(x) - \multiplier\cdot b_t(x),
\end{equation}
the Lagrangian dual function for \Cref{eq:opt} is given by 
\begin{equation}
    \dfunc(\multiplier\mid\requests) = \sum_{t=1}^T \fdual_t(\multiplier) + \multiplier\cdot \resource.
\end{equation}
The dual problem motivates what we will refer to as \emph{opportunity cost} policies:
given a dual variable (opportunity cost) $\multiplier\geq0$, let the decision at time $t$ be given by
\begin{equation}\label{eq:oppcost_policy}
    x_t = \arg\max_{x\in\constraints_t} f_t(x) - \multiplier\cdot b_t(x).
\end{equation}
When strong duality holds (e.g. concave $f$, convex $\constraints$, convex $b$, and Slater's condition), then there exists an optimal $\multiplier^\star$ such that \Cref{eq:oppcost_policy} returns the optimal (offline) solution to \Cref{eq:opt}.
Even when strong duality does not hold (e.g. in the general nonconvex setting), \Cref{eq:oppcost_policy} achieves good performance for appropriately chosen $\multiplier$, as we show in \Cref{sec:robust}.

\subsection{Measures of performance}\label{sec:measures_of_performance}
For theoretical analysis we consider two request models.
In the stochastic i.i.d. model, requests $\request_t$ are drawn i.i.d from a fixed but unknown distribution $\dist\in\distover(\reqset)$, where $\distover(\reqset)$ is the set of all distributions over $\reqset$.
In this setting, we measure the performance of an online algorithm by its regret:
\begin{definition}
    In the stochastic i.i.d. request model, the regret of an online algorithm that produces actions $x_t$ is
 \begin{equation}\label{eq:regret_defn}
    \regret(T) = \sup_{\dist\in\distover(\reqset)}\ev_{\requests\sim\dist}\left[\opt(\requests) - \sum_{t=1}^T f_t(x_t)\right].
\end{equation}   
\end{definition}
An online algorithm is said to be \emph{no-regret} if $\regret(T)$ grows sublinearly in $T$, as this implies that the average regret $\regret(T)/T$ vanishes as $T\to\infty$. 

We also consider the adversarial request model, in which requests $\request_t\in\reqset$ are chosen by an adversary with knowledge of the online algorithm.
In this setting, we measure the performance of an online algorithm by its (asymptotic) competitive ratio:
\begin{definition}
    An online algorithm that produces actions $x_t$ is asymptotically $\cratio$-competitive if
    \begin{equation}
        \lim_{T\to\infty} \sup_{\requests\in\reqset^T} \frac{1}{T}\left(\opt(\requests) - \cratio\cdot\sum_{t=1}^T f_t(x_t)\right) \leq 0.
    \end{equation}
\end{definition}
We remark that asymptotic competitive ratio is weaker than the standard notion of competitive ratio, which requires $\opt(\requests) \leq \cratio\cdot\sum_{t=1}^T f_t(x_t) + \crconst$ for some constant $\crconst$ independent of $T$ and $\requests$.

\section{Related work}\label{sec:related_work}

\subsection{Degradation-aware BESS optimization}
The importance of accounting for degradation when maximizing returns in BESS operation is well-established; excellent summaries of the literature can be found in \cite{maheshwari2020optimizing,collath2022aging}. 
Early works in this area employed proxies to account for the effect of degradation,
constraining or penalizing quantities such as
power, 
daily cycles,
depth of discharge,
minimum/maximum SoC,
or Ah throughput, cf. \cite[\S1.1]{maheshwari2020optimizing} and the references therein.
Later works incorporated explicit models (of varying complexity) for degradation and penalized degradation directly, solving a problem like \Cref{eq:oppcost_policy}, cf. e.g.
\cite{
    koller2013defining,
    fortenbacher2014modeling,
    abdulla2016optimal,
    wankmuller2017impact,
    collath2023increasing,
    kumtepeli2024depreciation,
    nnorom2025aging
},
and the review of \cite{collath2022aging} for further references.\footnote{
Different works use different terminology for the opportunity cost parameter $\multiplier$ in \Cref{eq:oppcost_policy}, e.g. cost of degradation, weighting factor, tradeoff parameter, or marginal cost of discharge.}
We emphasize that the focus of (almost) all of these works is on the formulation and solution of the optimization problem --  in particular, the tradeoff between fidelity of the degradation model and computational tractability -- rather than systematic methods for the selection of the (opportunity) cost of degradation parameter, $\multiplier$, which is typically chosen via a parameter sweep, or set to some proxy for opportunity cost, e.g. the cost of the battery divided by its useful capacity.
The exceptions are \cite{abdulla2016optimal} and \cite{kumtepeli2024depreciation}, which both estimate opportunity cost based on historical revenue and degradation. 
Specifically, \cite{abdulla2016optimal} sets $\multiplier$ to the ratio of cumulative revenue to cumulative degradation, as in \Cref{eq:rolling}, while \cite{kumtepeli2024depreciation} sets $\multiplier$ to the average reward per unit degradation, as in \Cref{eq:ratio_of_averages}. 
Neither work provides any theoretical analysis of the proposed estimation methods. 
 
\subsection{Online resource allocation}
The origins of the online resource allocation problem can, arguably, be traced to online bipartite matching \cite{karp1990optimal}. \emph{AdWords}, a generalization of online bipartite matching, which models the matching of bids by budget constrained advertisers to user queries by internet search engines, attracted much attention \cite{mehta2007adwords, buchbinder2007online, jaillet2011online}. 
The online stochastic convex programming problem, introduced by \cite{agrawal2014fast}, further generalizes AdWords allowing for convex rewards. \cite{agrawal2014fast} provided computationally efficient algorithms that achieve near-optimal regret guarantees in the i.i.d. and random permutation models.
The most relevant prior work is \cite{balseiro2023best}, which studied online mirror descent for general (not necessarily convex) online resource allocation problems in a variety of settings (including both the adversarial and stochastic i.i.d. models). 
We build on the work of \cite{balseiro2023best} by 
(a) incorporating non-zero minimum degradation to model calendar aging (\Cref{ass:calendar}), and 
(b) modifying the mirror descent update to accelerate opportunity cost estimation, without sacrificing theoretical guarantees, cf. \Cref{sec:relation_existing} for further discussion.

\subsection{Learning-augmented algorithms}
Worst-case analysis provides strong performance guarantees, but can be overly pessimistic in practical applications.
One paradigm for moving \emph{beyond worst-case analysis} is the study of learning (or prediction) augmented algorithms \cite[\S30]{roughgarden2021beyond}, which is predicated on the observation that in many practical online problems it is often possible to make good predictions about future requests, particularly given advances in machine learning. 
Roughly speaking, the goal is to design algorithms that perform well when the predictions are accurate (consistency), while still retaining performance guarantees when the predictions are (adversarially) poor (robustness).   
Early work in this area focused on caching \cite{lykouris2018competitive}, job scheduling, and ski rental \cite{purohit2018improving}, but has since expanded to cover a wide variety of online problems, such as bin packing \cite{angelopoulos2023online}, metrical task systems \cite{antoniadis2023online}, set cover \cite{bamas2020primal}, and others.
The most relevant prior work is that which focuses on online knapsack problems, cf. 
\cite{daneshvaramoli2025near} and the references therein for a good overview.
The work of \cite{daneshvaramoli2025near} itself assumes predictions of the minimum value $\hat{v}$ of any item accepted by the optimal offline solution are available, and shows that near-optimal consistency robustness tradeoffs can be achieved by a convex combination of decisions from an advice-following algorithm and the optimal worst-case algorithm of \cite{zhou2008budget}.
In contrast, we assume predictions of the opportunity cost of degradation are available, and design an algorithm that takes the `most robust actions possible' while ensuring consistency with the advice. 
Methodologically, our approach is similar to that of \cite{lechowicz2024chasing}, which studies chasing convex functions with long-term constraints.
Our setting is, in some sense, dual to this problem: the long-term constraint of \cite{lechowicz2024chasing} is of a covering type, while ours is of a packing type.     

\section{Robust algorithm}\label{sec:robust}
This section develops an algorithm for the online (single) resource allocation problem with calendar aging (\Cref{ass:calendar}), as defined in \Cref{sec:preliminaries}.
At time $t$, the algorithm produces a decision $x_t$ based on all requests $\request_1,\ldots,\request_t$ observed up until time $t$.
We refer to \Cref{alg:robust} as \emph{robust} because it attains a finite asymptotic competitive ratio in the adversarial request model, cf. \Cref{sec:adversarial_results}.

\subsection{Additional assumptions and notation}

Let $\remaining_t = \resource - \sum_{\dindex=1}^t b_\dindex(x_\dindex)$ denote the remaining capacity after time $t$.
Let $\multmax$ denote a known upper bound on the opportunity cost, i.e. an upper bound on the optimal Lagrange multiplier corresponding to the resource constraint in \Cref{eq:opt}.
We remind the reader that $\avgconsumption \defeq \resource/T$ is a constant parameter of the problem instance, such that $\resource = \avgconsumption \cdot T$ for the purpose of theoretical analysis (e.g. as $T\to\infty$).
Note that $T\leq\tmax\eqdef\resource/\bmin$ implies that $\avgconsumption \geq \bmin$.

\subsection{A robust algorithm}

The algorithm we propose is listed in \Cref{alg:robust}.
Actions $x_t$ are chosen according to an opportunity-cost-based policy, cf. \Cref{eq:oppcost_policy} and \Cref{eq:robust_action}, using an estimate $\multiplier_t$ of the opportunity cost.
This estimate $\multiplier_t$ is updated at each time using a combination (sum) of a rolling average $\rolling_t$ of past rewards divided by past degradation, cf. \Cref{eq:rolling}, and a correction term $\mdmultiplier_t$ obtained via online gradient descent on the dual function, cf. \Cref{eq:md_update}. 
The correction term is constrained to lie in the set $\mdmultset_t = [-\rolling_t, \multmax - \rolling_t]$ to ensure $\multiplier_t\in[0, \multmax]$.
We discuss the relationship between \Cref{alg:robust} and prior work in \Cref{sec:relation_existing} below, but briefly: online mirror (gradient) descent is known to have good performance in online resource allocation problems \cite{balseiro2023best}, while the addition of a rolling average estimate is designed to accelerate convergence to good opportunity cost estimates, particularly in the the stochastic setting. 
We remark that \Cref{alg:robust} is computationally efficient when the optimization problem in \Cref{eq:robust_action} can be solved efficiently; the rolling average and mirror descent updates are both cheap to compute.

\begin{algorithm}
    \caption{Robust algorithm}\label{alg:robust}
    \begin{algorithmic}[1]
    \Require{Initial opportunity cost estimate $\multiplier_1>0$, 
    and stepsize $\stepsize>0$}
    \For{each time $t = 1, 2, \ldots, T$}
    \State Receive request $\request_t = (f_t, b_t, \constraints_t)$
    \State Compute action
    \begin{equation}\label{eq:robust_action}
        x_t \gets \arg\max_{\substack{x\in\constraints_t \\ b_t(x)\leq\remaining_{t-1}}} f_t(x) - \multiplier_{t}\cdot b_t(x)
    \end{equation}
    \State Update remaining resource: $\remaining_{t} \gets \remaining_{t-1} - b_t(x_t)$
    \State Update rolling average estimate of opportunity cost:
    \begin{equation}\label{eq:rolling}
        \rolling_{t+1} \gets \frac{\sum_{\dindex=1}^t f_\dindex(x_\dindex)}{\sum_{\dindex=1}^t b_\dindex(x_\dindex)}
    \end{equation}    
    \State Obtain sub-gradient of dual function:
    \begin{equation}
        \subgrad_t = \avgconsumption - b_t(x_t)
    \end{equation}
    \State Update dual variable correction:
    \begin{equation}\label{eq:md_update}
        \mdmultiplier_{t+1} \gets \arg\min_{\mdmultiplier\in\mdmultset_{t+1}} \ \stepsize\cdot\subgrad_t\cdot\mdmultiplier + \frac{1}{2}(\mdmultiplier - \mdmultiplier_t)^2
    \end{equation}
    where $\mdmultset_{t+1} = [-\rolling_{t+1}, \multmax - \rolling_{t+1}]$
    \State Update opportunity cost estimate:
    \begin{equation}
        \multiplier_{t+1} \gets \rolling_{t+1} + \mdmultiplier_{t+1}
    \end{equation}
    \EndFor
    \end{algorithmic}
\end{algorithm}

\begin{remark}
    The rolling average estimate in \Cref{eq:rolling} uses all past observations. 
    One can of course use a finite window of past observations, cf. \Cref{eq:ratio_of_averages}.
    This may lead to better performance in practice, but results in worse theoretical guarantees.
    For a finite window size, the path length of $\rolling$ grows linearly, which leads to linear regret in the stochastic i.i.d. request model.
\end{remark}

\subsection{Stochastic i.i.d. request model}

\begin{restatable*}[Regret bound]{theorem}{regretthm}\label{thm:regret}
When requests $\request_t$ are drawn i.i.d. from an unknown distribution, \Cref{alg:robust} with stepsize $\stepsize$ achieves:
\begin{equation*}
    \regret(T) \leq \regretconsti + \regretconstii\cdot\frac{\log(T)+1}{\stepsize} + \regretconstiii\cdot\stepsize\cdot T,
\end{equation*}
where $\regretconsti = \fmax\bmax/\avgconsumption$, 
$\regretconstii = (5\multmax^2\fmax(1+\bmax/\bmin))/(2\bmin)$ 
and $\regretconstiii = (\avgconsumption + \bmax)^2/2$.
\end{restatable*}

A full proof is provided \refappendix{sec:proof_regret}.
Here, we briefly sketch the key ideas.
It is known that the regret can be bounded in terms of the cost of the online dual problem, i.e. $\sum_t \onlinecost_t(\multiplier_t)$ where $\onlinecost_t(\multiplier) = \subgrad_t\cdot\multiplier$, cf. \cite[Theorem 1]{balseiro2023best}.
Standard regret bounds (against a judiciously chosen comparator $\multiplier^\star$) for online mirror descent then lead to the desired result.
Our \Cref{alg:robust} does not apply standard online mirror descent, due to the addition of the rolling average estimate $\rolling_t$.
In particular, with $\multiplier_t = \rolling_t + \mdmultiplier_t$, 
we can express any static comparator $\multiplier^*$ as $\multiplier^* = \rolling_t + \tvoptmult_t$ for $\tvoptmult_t = \multiplier^* - \rolling_t$.
The regret of the online dual problem relative to $\multiplier^*$ is then given by
\begin{equation*}
    {\sum}_{t} \lbrace \onlinecost_t(\multiplier_t) - \onlinecost_t(\multiplier^*) \rbrace = {\sum}_t \lbrace \subgrad_t\cdot\mdmultiplier_t - {\sum}_t \subgrad_t\cdot\tvoptmult_t \rbrace,
\end{equation*}
i.e. the \emph{dynamic regret} of the mirror descent updates $\mdmultiplier_t$ relative to the time-varying comparator $\tvoptmult_t$.
It is known that dynamic regret bounds for mirror descent depend on the path length of the comparator sequence \cite[\S10]{hazan2016introduction}.
In this case, the path length of $\tvoptmult_t$ is equivalent to that of the rolling average estimate $\rolling_t$, which is inherently stable.
We show the path length of $\rolling_t$ is bounded by $\bigO(\log T)$, which gives the desired regret bound.

For appropriately chosen stepsize $\stepsize$, \Cref{thm:regret} yields a sublinear regret bound:

\begin{corollary}\label{cor:regret}
    With stepsize $\stepsize = \sqrt{\log(T)/T}$, \Cref{alg:robust} achieves $\regret(T) = \bigO(\sqrt{T\cdot\log T})$.
\end{corollary}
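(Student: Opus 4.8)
The plan is to substitute the prescribed stepsize $\stepsize = \sqrt{\log(T)/T}$ directly into the bound of \Cref{thm:regret} and track the asymptotic order of each of its three terms. The guiding observation is that this particular stepsize is chosen precisely to \emph{balance} the two $T$-dependent terms: the quantity $\regretconstii\cdot(\log(T)+1)/\stepsize$ is decreasing in $\stepsize$, while $\regretconstiii\cdot\stepsize\cdot T$ is increasing in $\stepsize$, and equating their orders is the standard way to optimize an online-convex-optimization regret bound of this shape.

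First I would examine the third term. Since $\stepsize\cdot T = \sqrt{\log(T)/T}\cdot T = \sqrt{T\log(T)}$, this contributes $\regretconstiii\cdot\sqrt{T\log(T)} = \bigO(\sqrt{T\log T})$. Next I would turn to the second term, using $1/\stepsize = \sqrt{T/\log(T)}$, which gives
\begin{equation*}
    \regretconstii\cdot\frac{\log(T)+1}{\stepsize} = \regretconstii\cdot\frac{\log(T)+1}{\sqrt{\log(T)}}\cdot\sqrt{T} = \bigO(\sqrt{T\log T}),
\end{equation*}
since $(\log(T)+1)/\sqrt{\log(T)} = \bigO(\sqrt{\log T})$. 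Finally, the leading term $\regretconsti = \fmax\bmax/\avgconsumption$ is a constant independent of $T$, hence $\bigO(1)$. Summing the three contributions, the dominant order is $\bigO(\sqrt{T\log T})$, as claimed.

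There is no genuine obstacle here: once the balancing stepsize is recognized, the argument is a routine substitution into \Cref{thm:regret} followed by elementary asymptotic bookkeeping, and it requires no further structural properties of \Cref{alg:robust}. The only point worth stating explicitly is that the two $T$-dependent terms end up of the \emph{same} order $\sqrt{T\log T}$ under this choice of $\stepsize$, confirming that the stepsize is (up to constants) the optimal one for this bound.
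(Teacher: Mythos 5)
Your proof is correct and follows essentially the same route as the paper: both substitute $\stepsize = \sqrt{\log(T)/T}$ directly into the bound of \Cref{thm:regret} and check that each term is $\bigO(\sqrt{T\cdot\log T})$, with the second term handled via $(\log(T)+1)/\sqrt{\log T} = \bigO(\sqrt{\log T})$ exactly as in the paper's bookkeeping. Your added remark that this stepsize balances the two $T$-dependent terms is accurate and consistent with the paper's implicit reasoning.
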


\begin{proof}
    From \Cref{thm:regret}, we have $\regret(T)$ bounded by
    \begin{align*}
    & \regretconsti + \regretconstii\cdot\sqrt{T}\cdot\frac{\log(T)+1}{\sqrt{\log T}} + \regretconstiii\cdot\frac{\sqrt{\log T}}{\sqrt{T}}\cdot T \\ 
    =& \regretconsti + \regretconstii\cdot\sqrt{T\cdot\log{T}} + \regretconstiii\cdot\sqrt{T\cdot\log T} + \regretconstii\cdot\frac{\sqrt{T}}{\sqrt{\log T}} \\
    =& \bigO(\sqrt{T\cdot\log T}),
    \end{align*}
    as $\sqrt{T}/\sqrt{\log T}$ grows more slowly than $\sqrt{T\cdot\log T}$.
\end{proof}

\subsection{Adversarial request model}\label{sec:adversarial_results}

\begin{restatable*}[Asymptotic competitive ratio]{theorem}{crthm}\label{thm:cr}
    Let $\cratio = (\bmax - \bmin)/(\avgconsumption - \bmin)$.
    When requests $\request_t$ are chosen by an adversary, \Cref{alg:robust} with stepsize $\stepsize$ achieves
\begin{equation*}
    \opt(\requests) - \cratio\cdot\sum_{t=1}^T f_t(x_t) \leq 
    \crconsti + \crconstii\cdot\frac{\log(T)+1}{\stepsize} + \crconstiii\cdot\stepsize\cdot T,
\end{equation*}
where $\crconstii=\cratio\cdot\regretconstii$ and $\crconstiii=\cratio\cdot\regretconstiii$, with $\regretconsti$, $\regretconstii$ and $\regretconstiii$ as in \Cref{thm:regret}.    
\end{restatable*}

A full proof is provided in \refappendix{sec:proof_cr}.
For appropriately chosen stepsize $\stepsize$, this yields an asymptotic competitive ratio of $\cratio$.

\begin{corollary}
    With stepsize $\stepsize = \sqrt{\log(T)/T}$, \Cref{alg:robust} is asymptotically $\cratio$-competitive.
\end{corollary}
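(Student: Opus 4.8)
The plan is to derive this corollary directly from \Cref{thm:cr} by substituting the prescribed stepsize and performing an asymptotic analysis that closely mirrors the proof of \Cref{cor:regret}. The crucial observation is that the upper bound in \Cref{thm:cr} holds for \emph{every} request sequence $\requests\in\reqset^T$, and that its right-hand side $\crconsti + \crconstii\cdot(\log(T)+1)/\stepsize + \crconstiii\cdot\stepsize\cdot T$ depends only on $T$, on $\stepsize$, and on the problem constants $\fmax,\bmax,\bmin,\avgconsumption,\multmax$ (through $\crconsti,\crconstii,\crconstiii$), but \emph{not} on $\requests$ itself. Consequently the same quantity upper-bounds the supremum over $\requests\in\reqset^T$ appearing in the definition of asymptotic competitiveness, and it only remains to show that this bound, divided by $T$, vanishes as $T\to\infty$.

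First I would note that $\cratio = (\bmax-\bmin)/(\avgconsumption-\bmin)$ is a fixed constant independent of $T$: the quantities $\bmax,\bmin$ are problem constants, and $\avgconsumption$ is, by assumption, held fixed as $T\to\infty$ (with $\resource=\avgconsumption\cdot T$). Hence $\crconstii=\cratio\cdot\regretconstii$ and $\crconstiii=\cratio\cdot\regretconstiii$ are themselves constants. Next I would substitute $\stepsize=\sqrt{\log(T)/T}$ into the right-hand side of \Cref{thm:cr} and divide by $T$, which produces the three summands
\begin{equation*}
\frac{\crconsti}{T}, \qquad \crconstii\cdot\frac{\log(T)+1}{\sqrt{T\log T}}, \qquad \crconstiii\cdot\sqrt{\frac{\log T}{T}}.
\end{equation*}
This is the same scaling computed in \Cref{cor:regret}: the right-hand side of \Cref{thm:cr} is $\bigO(\sqrt{T\log T})$ (up to the factor $\cratio$ already folded into $\crconstii,\crconstiii$), so dividing by $T$ yields $\bigO(\sqrt{\log T/T})$, and each of the three terms tends to $0$.

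Finally I would combine these facts: for every $T$,
\begin{equation*}
\sup_{\requests\in\reqset^T}\frac{1}{T}\Bigl(\opt(\requests)-\cratio\cdot{\sum}_{t=1}^T f_t(x_t)\Bigr) \leq \frac{1}{T}\Bigl(\crconsti + \crconstii\cdot\frac{\log(T)+1}{\stepsize} + \crconstiii\cdot\stepsize\cdot T\Bigr),
\end{equation*}
and letting $T\to\infty$ drives the right-hand side to $0$, so the limit of the left-hand side is at most $0$. This is exactly the defining inequality of asymptotic $\cratio$-competitiveness, completing the argument.

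I do not anticipate a genuine technical obstacle, since \Cref{thm:cr} already carries the analytical weight; the proof is essentially a limit computation. The two points that require care are minor: first, one must invoke the uniformity of the bound in $\requests$ to pass from the per-sequence inequality of \Cref{thm:cr} to a bound on the supremum inside the competitive-ratio limit; and second, one should confirm that $\cratio$ is a finite constant, which needs $\avgconsumption>\bmin$. The latter holds in the regime of interest, since $T\leq\tmax\eqdef\resource/\bmin$ gives $\avgconsumption=\resource/T\geq\bmin$, and the practically relevant case $T<\tmax$ yields the strict inequality $\avgconsumption>\bmin$, keeping the denominator of $\cratio$ strictly positive.
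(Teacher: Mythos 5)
Your proposal is correct and takes essentially the same route as the paper: the paper's proof likewise substitutes $\stepsize=\sqrt{\log(T)/T}$ into the bound of \Cref{thm:cr} (reusing the computation from \Cref{cor:regret} to get $\bigO(\sqrt{T\cdot\log T})$), divides by $T$, and lets $T\to\infty$. Your added remarks --- that the bound is uniform in $\requests$ so it controls the supremum, and that $\cratio$ is finite since $\avgconsumption>\bmin$ --- are left implicit in the paper but are consistent with it and do not change the argument.
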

\begin{proof}
    From the proof of \Cref{cor:regret}, with stepsize $\stepsize = \sqrt{\log(T)/T}$ we have
    \begin{equation}
        \opt(\requests) - \cratio\cdot{\sum}_{t=1}^T f_t(x_t) \leq \bigO(\sqrt{T\cdot\log T}).
    \end{equation}
    Dividing both sides by $T$, taking the limit $T\to\infty$, and noting that $\sqrt{T\cdot\log T}$ is sublinear gives the result.
\end{proof}

\subsection{Relationship to existing algorithms}\label{sec:relation_existing}

\subsubsection{Online resource allocation}
Our \Cref{alg:robust} is an extension of the online mirror descent algorithm for online resource allocation problems proposed by \cite{balseiro2023best}.
Specifically, \Cref{alg:robust} introduces a rolling average estimate of the opportunity cost, cf. \Cref{eq:rolling}. Theoretically, our main contribution is to show that \Cref{alg:robust}  retains sublinear regret in the stochastic i.i.d. request model (\Cref{thm:regret}) while maintaining a finite asymptotic competitive ratio in the adversarial request model (\Cref{thm:cr}), despite the addition of the rolling average estimate and the challenges introduced by calendar aging (\Cref{ass:calendar}).
The rolling average estimate does not come for free: constants $\regretconstii$ and $\crconstii$ depend on $\multmax$, which is not the case in \cite{balseiro2023best}.
In addition, our asymptotic competitive ratio of $(\bmax - \bmin)/(\avgconsumption - \bmin)$ is worse than $\bmax/\avgconsumption$ from \cite{balseiro2023best}, and blows up as $\avgconsumption \to \bmin$.
However, this appears to be a consequence of the problem specification, namely calendar aging (\Cref{ass:calendar}), rather than our algorithm.

\subsubsection{Degradation-aware BESS optimization}
The work of \cite{kumtepeli2024depreciation} also proposed to estimate the opportunity cost with a rolling average (without a mirror descent correction). Specifically, \cite{kumtepeli2024depreciation} proposed the estimate
\begin{equation}\label{eq:ratio_of_averages}
    \multiplier_{t+1} = \frac{1}{\window} \cdot {\sum}_{\dindex=t-\window+1}^t \frac{f_\dindex(x_\dindex)}{ b_\dindex(x_\dindex)},
\end{equation}
i.e. an `average of ratios' as opposed to the `ratio of averages' in \Cref{eq:rolling}.
While \Cref{eq:ratio_of_averages} can work well in certain settings, it provably converges to poor estimates, even in simple settings, as the example below demonstrates.

\begin{example}\label{ex:stochastic_case}
    Consider an instance of the online resource allocation problem in the stochastic i.i.d. setting. Let $b_t(x) = \egoffset + x$ with $\egoffset=0.01$ and $f_t(x) = a\cdot x$ with $a\sim U[0,1]$ for all $t$.
    Let $\avgconsumption=0.1$ and $T=2000$ giving $\resource=200$.
    We depict a realization of this instance in \Cref{fig:example}, comparing \Cref{alg:robust}, the `ratio of averages' method of \cite{kumtepeli2024depreciation}, and the online mirror descent method of \cite{balseiro2023best}.
    We take $\window=T$ (i.e. use all past observations) for all rolling averages.
    We also choose $\stepsize = 1/\sqrt{T}$ for both \Cref{alg:robust} and mirror descent.
    For this realization, the optimal multiplier is $\multiplier^\star \approx 0.87$.
    It is easy to show that the `ratio of averages' estimate in \Cref{eq:ratio_of_averages} converges to $\sqrt{2}-1\approx 0.41$ as $\egoffset\to 0$ and $T\to\infty$.  
    Therefore, the method of \cite{kumtepeli2024depreciation} consumes the resource too quickly.
    \Cref{alg:robust} achieves higher total reward than mirror descent, as the rolling average estimate in \Cref{eq:rolling} allows it to approach $\multiplier^\star$ more quickly than mirror descent alone. See \Cref{fig:example} for details.
\end{example}

\begin{figure}[htbp]
    \centering
    \includegraphics[width=1.0\columnwidth]{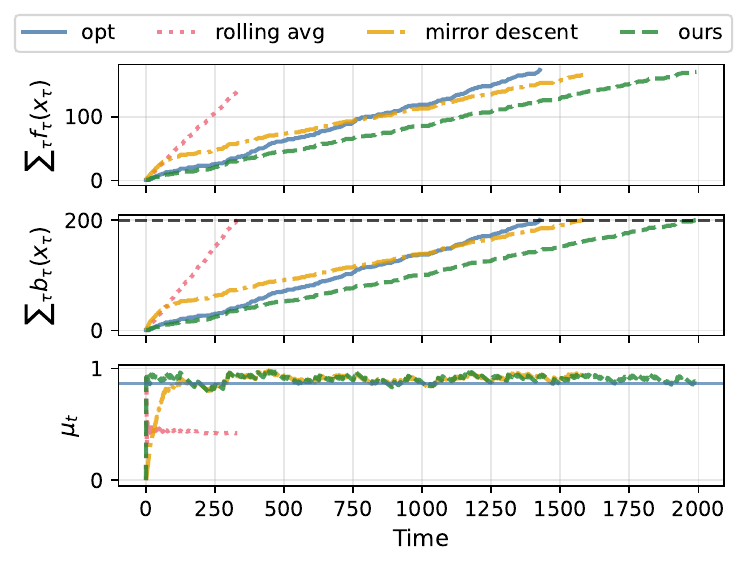}
    \vspace{-1em}
    \caption{Illustration of \Cref{ex:stochastic_case}, showing (top to bottom) cumulative reward, cumulative degradation, and opportunity cost estimates. 
    The optimal total reward and multiplier are $\opt\approx 173.90$ and $\multiplier^\star \approx 0.87$, respectively.
    The rolling average estimate in \Cref{eq:ratio_of_averages} systematically underestimates $\multiplier^\star$, leading to early resource depletion and lower cumulative reward ($140.19$).
    The rolling average estimate in \Cref{eq:rolling} allows our \Cref{alg:robust} to approach $\multiplier^\star$ more quickly than pure mirror descent, resulting in higher total reward ($171.08$ vs $166.10$).
    }
    \label{fig:example}
\end{figure}
\section{Learning-augmented algorithm}\label{sec:learning_augmented}

The robust \Cref{alg:robust} from \Cref{sec:robust} is a purely `backward-looking' algorithm.
In BESS applications, one typically has access to forecasts of future electricity prices, and degradation models, that can be used to predict the optimal opportunity cost $\multiplier^\star$.
Predictions can come from a variety of sources, such as machine learning models trained on historical data, the domain knowledge of a human expert, or some combination thereof.
Unfortunately, such predictions are never perfect.
In this section we assume access to two online algorithms, $\adv$ and $\rob$, that each suggest $\multiplier_t$ at time $t$.
$\adv$ corresponds to untrusted (blackbox) advice, that could be arbitrarily bad.
$\rob$ corresponds to some robust algorithm, such as \Cref{alg:robust}.
We seek a prediction (or \emph{learning}) augmented algorithm that performs arbitrarily close-to-optimal when predictions are accurate, while retaining the robustness of \rob\ even when predictions are poor.
The algorithm (\alg) we propose seeks to take actions that are as `robust as possible' while remaining consistent with \adv, meaning \alg\ is guaranteed to accrue reward no less than $1+\epsilon$ times that of \adv, for arbitrary $\epsilon>0$ (\Cref{claim:consistent}).
We show that the conditions that guarantee consistency are also necessary (\Cref{claim:necessary}).

\subsection{Additional notation and assumptions}

For the advice \adv\ let $x_t^\adv$ be the action corresponding to opportunity cost $\multiplier_t^\adv$ at time $t$,
and denote $F_{\ts:\te}^{\adv} = \sum_{t=\ts}^{\te} f_{t}(x_{t}^{\adv})$ and $B_{\ts:\te}^{\adv} = \sum_{t=\ts}^{\te} b_{t}(x_{t}^{\adv})$.
Let $x_t$, $\algrew_{\ts:\te}$, and $\algdeg_{\ts:\te}$ denote the same quantities for our prediction-augmented algorithm \alg.
Let $\remaining_t = \resource - \algdeg_{1:t}$ and $\advremaining_t = \resource - B^\adv_{1:t}$ be the remaining capacity for $\alg$ and $\adv$ respectively at time $t$.
Let $\leading_t = \algdeg_{1:t} - B^\adv_{1:t}$, i.e. the additional capacity consumed by $\alg$ relative to $\adv$ up to and including time $t$.
\begin{assumption}[Bounds on reward per unit of degradation]\label{ass:bounded-rates}
    For all requests $\request=(f,b,\constraints)$, we have $\minrate\cdot b(x)\leq f(x) \leq \maxrate\cdot b(x)$ for all $x\in\constraints$, where $\maxrate,\minrate\geq0$ are known constants.
\end{assumption}

\begin{remark}
    \Cref{ass:bounded-rates} is stronger than those in \Cref{sec:robust}.
    In particular, if $\minrate>0$ it says that we can always achieve non-zero reward at each $t$.
    We emphasize that our algorithm is still valid when $\minrate=0$; however, the conditions for consistency become much more stringent: specifically, one must ensure that $(1+\epsilon)\cdot\algrew_{1:t} \geq \advrew_{1:t}$ at all $t$, cf. \Cref{eq:non-trivial-time-constraint}, and it is very difficult to achieve $\algdeg_{1:t} \geq B^\adv_{1:t}$, cf. \Cref{eq:leading}. In other words, you are essentially forced to follow the advice.
    This is perhaps unsurprising: in the closely-related online multiple-choice knapsack problem, the best known competitive ratio is $\log(U/L)+2$ where $U$ and $L$ are upper and lower bounds on the density (value divided by weight) of items, respectively \cite{zhou2008budget}.
    The problem becomes intractable as $L\to 0$ \cite{marchetti1995stochastic}.
\end{remark}

\begin{remark}
    Though we assume that \adv\ and \rob\ are opportunity cost policies, it is straightforward to extend the algorithm \alg\ to arbitrary online algorithms by working directly with the decisions $x_t^\adv$ and $x_t^\rob$.
\end{remark}

\subsection{Consistent algorithm}
The learning-augmented algorithm we propose is listed in \Cref{alg:learning_augmented}.
The basic idea behind the algorithm ($\alg$) is simple:
at each time $t$, we simulate \rob\ and \adv\ to generate $\multiplier_t^\rob$ and $\multiplier_t^\adv$, respectively. Then $\alg$ attempts to take the `most robust action possible' subject to constraints that ensure consistency with $\adv$. 
More precisely, we maximize $\sel_t\in[0,1]$ and choose the action $x_t$ according to the opportunity cost policy with $\multiplier_t = \sel_t\cdot\multiplier_t^\rob + (1-\sel_t)\cdot\multiplier_t^\adv$, such that $\sel_t=1$ corresponds to following the robust algorithm, and $\sel_t=0$ corresponds to following the advice. 
To ensure $1+\epsilon$ consistency we enforce the following constraints:
\begin{subequations}\label{eq:non-trivial-time-constraint}
    \begin{flalign}
        &\text{if } \quad\ \ \remaining_t \leq \timeremaining_t\cdot\bmax && \\
        &\text{then } \ (1+\epsilon)\cdot\algrew_{1:t} + \epsilon\cdot\minrate\cdot(\remaining_t - \bmax + \bmin) \geq \advrew_{1:t} &&
    \end{flalign}
\end{subequations}\vspace{-1.5em}
\begin{subequations}\label{eq:trivial-time-constraint}
    \begin{flalign}
        & \text{if } \quad\ \ \remaining_t > \timeremaining_t\cdot\bmax && \\
        & \text{then } \ (1+\epsilon)\cdot\algrew_{1:t} + \epsilon\cdot\minrate\cdot\bmax\cdot\timeremaining_t \geq \advrew_{1:t} &&
    \end{flalign}
\end{subequations}\vspace{-1.5em}
\begin{subequations}\label{eq:leading}
    \begin{flalign}
        & \text{if } \quad\ \ \leading_t>0 && \\
    &\text{then } \  (1+\epsilon)\cdot\algrew_{1:t} + \epsilon\cdot\minrate\cdot\remaining_t \dots && \\
    & \hspace{8em} \dots \geq \advrew_{1:t} + 
     \maxrate\cdot(\min\lbrace \bmax, \remaining_t \rbrace +\leading_t). \nonumber &&
    \end{flalign}
\end{subequations}
Each of these constraints take the form of logical implications, that would typically be encoded with mixed-integer programming techniques, cf. \Cref{sec:computational_considerations} for discussion of computational considerations.
The constraints can be understood as follows.
\Cref{eq:non-trivial-time-constraint} relaxes the consistency constraint by recognizing that, roughly speaking, $\alg$ could follow the actions of $\adv$ for its remaining capacity $\remaining_t$, accumulating at least $\minrate\cdot\remaining_t$ additional reward.
The tightening of \Cref{eq:non-trivial-time-constraint} by $\propto\bmax - \bmin$ ensures that \Cref{eq:trivial-time-constraint} will hold if $\remaining_t > \timeremaining_t\cdot\bmax$.
This latter condition ensures that in a \emph{resource-rich endgame}, where \alg\ cannot possibly exhaust its resource, consistency is maintained even if the worst-case (i.e. smallest) rewards are encountered. 
Finally, when $\leading_t>0$, meaning that $\alg$ has consumed more resource than $\adv$ after time $t$, the term on the RHS of \Cref{eq:leading} tightens the consistency constraint to account for 
the additional reward that \adv\ could accrue with its additional resources.

For the purpose of stating the algorithm, it is useful to define the following \emph{endgame} conditions, which are checked at the start of each time period $t$:
\begin{subequations}\label{eq:endgame}
\begin{align}
    \text{Resource-rich endgame: } & \remaining_{t-1} > \timeremaining_{t-1}\cdot\bmax, \label{eq:resource-rich-endgame}\\
    \text{Resource-poor endgame: } & \remaining_{t-1}\leq\bmax \text{ and } \leading_{t-1}>0  \label{eq:resource-poor-endgame} \\
    \text{No-advice endgame: } & B^\adv_{1:t-1}=\resource \label{eq:advice-exhausted-endgame} \\
    \text{Time's up endgame: } & t=T \label{eq:time-up-endgame}
\end{align}
\end{subequations}
If any of these conditions are satisfied, then $\alg$ sets $\multiplier_\tau=0$ and plays greedily for all remaining time periods $\tau\geq t$.

\begin{algorithm}
    \caption{Learning-augmented algorithm}\label{alg:learning_augmented}
    \begin{algorithmic}[1]
    \Require{Consistency parameter $\epsilon>0$}
    \For{each time $t = 1, \ldots, T$}
    \If{\Cref{eq:resource-rich-endgame} or \Cref{eq:resource-poor-endgame} or \Cref{eq:advice-exhausted-endgame} or \Cref{eq:time-up-endgame}}
    \State $\multiplier_t \gets 0$ 
    \Else  
    \State \vspace{-1.8em}
    \begin{align*}
        \sel_t \gets \arg\max_{\sel\in[0,1]} \ & \sel \quad
        \text{s.t.} \quad 
        \text{\Cref{eq:non-trivial-time-constraint}, }
        \text{\Cref{eq:trivial-time-constraint}, }
        \text{\Cref{eq:leading}} 
    \end{align*}
    \State $\multiplier_t \gets \sel_t\cdot \multiplier_t^\rob + (1-\sel_t)\cdot \multiplier_t^\adv$
    \EndIf
    \State Compute $x_t$ according to \Cref{eq:robust_action} with multiplier $\multiplier_t$   
    \State Update $\leading_t$, $\remaining_t$, $\advremaining_t$, $\timeremaining_t$
    \EndFor
    \end{algorithmic}
\end{algorithm}

We emphasize that \Cref{alg:learning_augmented} simulates \rob\ and \adv\ independently of the actions taken by \alg.
This means that consistency is relative to the actual counterfactual performance of \adv, i.e. $(1+\epsilon)\cdot\algrew_{1:T} \geq \advrew_{1:T}$ implies that we (\alg) are no more than a factor $1+\epsilon$ worse than if we had simply followed the advice at all times.
In the event that \rob\ exhausts its resource (in simulation) before \alg, say at time $\tau$, we simply take $\multiplier_t^\rob=\multiplier_\tau^\rob$ for all $t>\tau$.

\subsection{Computational considerations}\label{sec:computational_considerations}
We emphasize that, at time $t$, the quantities $\algrew_t$ and $\algdeg_t$ are functions of the decision $x_t$, which in turn is a function of $\sel_t$ via the opportunity cost policy \Cref{eq:robust_action} with $\multiplier_t=\sel_t\cdot \multiplier_t^\rob + (1-\sel_t)\cdot \multiplier_t^\adv$.
As such, the constraints \Cref{eq:non-trivial-time-constraint,eq:trivial-time-constraint,eq:leading} are in general non-linear, non-convex mixed-integer in $\sel_t$.
Consequently, solving the optimization problem in line 5 of \Cref{alg:learning_augmented} to arbitrary precision appears computationally daunting.
However, the decision variable $\sel_t$ is one-dimensional and bounded in $[0,1]$, and $\algrew_t$ and $\algdeg_t$ are monotonically decreasing in $\multiplier_t$, which is itself linear in $\sel_t$.
As such, one can find good solutions efficiently, e.g. using bisection search.
We also remark, in passing, that \Cref{eq:robust_action} enforces $b_t(x_t)\leq \remaining_{t-1}$, for arbitrary $\multiplier_t\geq0$. 

\subsection{Algorithm properties}
\Cref{alg:learning_augmented} has the following properties:

\begin{restatable*}[Consistency]{proposition}{consistent}
\label{claim:consistent}
For any $\epsilon>0$,
\Cref{alg:learning_augmented} is $1+\epsilon$-consistent with \adv, i.e., $(1+\epsilon)\cdot\algrew_{1:T} \geq \advrew_{1:T}$.
\end{restatable*}
A full proof is provided in \refappendix{sec:proof_consistency}.

\begin{restatable*}[Necessity]{proposition}{necessary}
\label{claim:necessary}
The constraints \Cref{eq:non-trivial-time-constraint,eq:trivial-time-constraint,eq:leading} are necessary for $1+\epsilon$ consistency.
\end{restatable*}
In other words: any algorithm that fails to satisfy \Cref{eq:non-trivial-time-constraint}, \Cref{eq:trivial-time-constraint}, or \Cref{eq:leading} at some time $t$ can be made to violate consistency by an adversarial choice of future requests, cf. \refappendix{sec:proof_necessity} for proof.
\Cref{claim:necessary} justifies the claim that \Cref{alg:learning_augmented} is the `most robust' consistent algorithm possible, as at each time we try to maximally follow the robust algorithm \rob, subject to constraints that are necessary for consistency.

\section{Conclusions}\label{sec:conclusions}
The stated aim of \cite{kumtepeli2024depreciation} was to ``\emph{establish the beginnings of a more systematic approach for determining battery \emph{[degradation]} cost within the real-time optimizer.}''
We hope this paper has made progress towards this aim.
In particular, we have proposed an online algorithm for estimating the opportunity cost of degradation, and established performance guarantees (in terms of total accumulated reward) in both the stochastic i.i.d. and adversarial settings.
When untrusted blackbox advice (forecasts) about the opportunity cost are available, we have proposed a learning-augmented algorithm that takes the `most robust actions possible' while ensuring arbitrarily close-to-optimal performance when the advice is accurate.
Future work will focus on evaluating the proposed algorithms with realistic battery degradation models and electricity price data.

\bibliographystyle{IEEEtran} 
\bibliography{refs}

@article{nnorom2025aging,
  title={Aging-aware battery control via convex optimization},
  author={Nnorom Jr, Obidike and Ogut, Giray and Boyd, Stephen and Levis, Philip},
  journal={arXiv preprint arXiv:2505.09030},
  year={2025}
}

@inproceedings{kumtepeli2024depreciation,
  title={Depreciation Cost is a Poor Proxy for Revenue Lost to Aging in Grid Storage Optimization},
  author={Kumtepeli, Volkan and Hesse, Holger and Morstyn, Thomas and Nosratabadi, Seyyed Mostafa and Aunedi, Marko and Howey, David A},
  booktitle={2024 American Control Conference (ACC)},
  pages={701--706},
  year={2024},
  organization={IEEE}
}

@article{collath2023increasing,
  title={Increasing the lifetime profitability of battery energy storage systems through aging aware operation},
  author={Collath, Nils and Cornejo, Martin and Engwerth, Veronika and Hesse, Holger and Jossen, Andreas},
  journal={Applied Energy},
  volume={348},
  pages={121531},
  year={2023},
  publisher={Elsevier}
}

@article{collath2022aging,
  title={Aging aware operation of lithium-ion battery energy storage systems: A review},
  author={Collath, Nils and Tepe, Benedikt and Englberger, Stefan and Jossen, Andreas and Hesse, Holger},
  journal={Journal of Energy Storage},
  volume={55},
  pages={105634},
  year={2022},
  publisher={Elsevier}
}

@article{maheshwari2020optimizing,
  title={Optimizing the operation of energy storage using a non-linear lithium-ion battery degradation model},
  author={Maheshwari, Arpit and Paterakis, Nikolaos G and Santarelli, Massimo and Gibescu, Madeleine},
  journal={Applied Energy},
  volume={261},
  pages={114360},
  year={2020},
  publisher={Elsevier}
}

@article{shin2020optimal,
  title={Optimal energy storage sizing with battery augmentation for renewable-plus-storage power plants},
  author={Shin, Hunyoung and Hur, Jin},
  journal={IEEE Access},
  volume={8},
  pages={187730--187743},
  year={2020},
  publisher={IEEE}
}

@article{wankmuller2017impact,
  title={Impact of battery degradation on energy arbitrage revenue of grid-level energy storage},
  author={Wankm{\"u}ller, Florian and Thimmapuram, Prakash R and Gallagher, Kevin G and Botterud, Audun},
  journal={Journal of Energy Storage},
  volume={10},
  pages={56--66},
  year={2017},
  publisher={Elsevier}
}

@article{abdulla2016optimal,
  title={Optimal operation of energy storage systems considering forecasts and battery degradation},
  author={Abdulla, Khalid and De Hoog, Julian and Muenzel, Valentin and Suits, Frank and Steer, Kent and Wirth, Andrew and Halgamuge, Saman},
  journal={IEEE Transactions on Smart Grid},
  volume={9},
  number={3},
  pages={2086--2096},
  year={2016},
  publisher={IEEE}
}

@inproceedings{fortenbacher2014modeling,
  title={Modeling, identification, and optimal control of batteries for power system applications},
  author={Fortenbacher, Philipp and Mathieu, Johanna L and Andersson, G{\"o}ran},
  booktitle={2014 Power systems computation conference},
  pages={1--7},
  year={2014},
  organization={IEEE}
}

@inproceedings{koller2013defining,
  title={Defining a degradation cost function for optimal control of a battery energy storage system},
  author={Koller, Michael and Borsche, Theodor and Ulbig, Andreas and Andersson, G{\"o}ran},
  booktitle={2013 IEEE Grenoble Conference},
  pages={1--6},
  year={2013},
  organization={IEEE}
}

@article{balseiro2023best,
  title={The Best of Many Worlds: Dual Mirror Descent for Online Allocation Problems},
  author={Balseiro, Santiago R and Lu, Haihao and Mirrokni, Vahab},
  journal={Operations Research},
  volume={71},
  number={1},
  pages={101--119},
  year={2023},
  publisher={INFORMS}
}

@article{hazan2016introduction,
  title={Introduction to online convex optimization},
  author={Hazan, Elad and others},
  journal={Foundations and Trends{\textregistered} in Optimization},
  volume={2},
  number={3-4},
  pages={157--325},
  year={2016},
  publisher={Now Publishers, Inc.}
}

@inproceedings{agrawal2014fast,
  title={Fast algorithms for online stochastic convex programming},
  author={Agrawal, Shipra and Devanur, Nikhil R},
  booktitle={Proceedings of the twenty-sixth annual ACM-SIAM symposium on Discrete algorithms},
  pages={1405--1424},
  year={2014},
  organization={SIAM}
}

@article{jaillet2011online,
  title={Online resource allocation problems},
  author={Jaillet, Patrick and Lu, Xin},
  journal={Rock \& Soil Mechanics},
  volume={86},
  pages={3701--3704},
  year={2011}
}

@inproceedings{zhou2008budget,
  title={Budget constrained bidding in keyword auctions and online knapsack problems},
  author={Zhou, Yunhong and Chakrabarty, Deeparnab and Lukose, Rajan},
  booktitle={Proceedings of the 17th international conference on world wide web},
  pages={1243--1244},
  year={2008}
}

@inproceedings{buchbinder2007online,
  title={Online primal-dual algorithms for maximizing ad-auctions revenue},
  author={Buchbinder, Niv and Jain, Kamal and Naor, Joseph},
  booktitle={European Symposium on Algorithms},
  pages={253--264},
  year={2007},
  organization={Springer}
}

@article{mehta2007adwords,
  title={Adwords and generalized online matching},
  author={Mehta, Aranyak and Saberi, Amin and Vazirani, Umesh and Vazirani, Vijay},
  journal={Journal of the ACM (JACM)},
  volume={54},
  number={5},
  pages={22--es},
  year={2007},
  publisher={ACM New York, NY, USA}
}

@article{marchetti1995stochastic,
  title={Stochastic on-line knapsack problems},
  author={Marchetti-Spaccamela, Alberto and Vercellis, Carlo},
  journal={Mathematical Programming},
  volume={68},
  number={1},
  pages={73--104},
  year={1995},
  publisher={Springer}
}

@inproceedings{karp1990optimal,
  title={An optimal algorithm for on-line bipartite matching},
  author={Karp, Richard M and Vazirani, Umesh V and Vazirani, Vijay V},
  booktitle={Proceedings of the twenty-second annual ACM symposium on Theory of computing},
  pages={352--358},
  year={1990}
}

@inproceedings{daneshvaramoli2025near,
  title={Near-Optimal Consistency-Robustness Trade-Offs for Learning-Augmented Online Knapsack Problems},
  author={Daneshvaramoli, Mohammadreza and Karisani, Helia and Lechowicz, Adam and Sun, Bo and Musco, Cameron N and Hajiesmaili, Mohammad},
  booktitle={International Conference on Machine Learning},
  year={2025},
  organization={PMLR}
}

@article{lechowicz2024chasing,
  title={Chasing convex functions with long-term constraints},
  author={Lechowicz, Adam and Christianson, Nicolas and Sun, Bo and Bashir, Noman and Hajiesmaili, Mohammad and Wierman, Adam and Shenoy, Prashant},
  journal={arXiv preprint arXiv:2402.14012},
  year={2024}
}

@article{antoniadis2023online,
  title={Online metric algorithms with untrusted predictions},
  author={Antoniadis, Antonios and Coester, Christian and Eli{\'a}{\v{s}}, Marek and Polak, Adam and Simon, Bertrand},
  journal={ACM transactions on algorithms},
  volume={19},
  number={2},
  pages={1--34},
  year={2023},
  publisher={ACM New York, NY}
}

@article{angelopoulos2023online,
  title={Online bin packing with predictions},
  author={Angelopoulos, Spyros and Kamali, Shahin and Shadkami, Kimia},
  journal={Journal of Artificial Intelligence Research},
  volume={78},
  pages={1111--1141},
  year={2023}
}

@article{bamas2020primal,
  title={The primal-dual method for learning augmented algorithms},
  author={Bamas, Etienne and Maggiori, Andreas and Svensson, Ola},
  journal={Advances in Neural Information Processing Systems},
  volume={33},
  pages={20083--20094},
  year={2020}
}

@inproceedings{lykouris2018competitive,
  title={Competitive Caching with Machine Learned Advice},
  author={Lykouris, Thodoris and Vassilvtiskii, Sergei},
  booktitle={International Conference on Machine Learning},
  pages={3296--3305},
  year={2018},
  organization={PMLR}
}

@article{purohit2018improving,
  title={Improving online algorithms via ML predictions},
  author={Purohit, Manish and Svitkina, Zoya and Kumar, Ravi},
  journal={Advances in Neural Information Processing Systems},
  volume={31},
  year={2018}
}

@book{roughgarden2021beyond,
  title={Beyond the worst-case analysis of algorithms},
  author={Roughgarden, Tim},
  year={2021},
  publisher={Cambridge University Press}
}

@mastersthesis{stevenson2019impact,
  title={The Impact of Inequality on Asset Prices When Households Care About Wealth},
  author={Stevenson, Gary},
  year={2019},
  type={MSc thesis},
  url={https://www.wealtheconomics.org/wp-content/uploads/2023/03/Stevenson-2019.pdf}
}

\pagebreak 
\appendices 
\onecolumn 
\section{Proofs for robust algorithm}

\subsection{Proof of regret bound}\label{sec:proof_regret}

In this section we present the proof of \Cref{thm:regret}, restated below:
\regretthm

\subsubsection{Bounding regret with online dual problem}\label{sec:regret_proof_existing_results}
In this section we recap known results from \cite[Theorem 1]{balseiro2023best}, which is the starting point for our proof.
Denote $\onlinecost_t(\multiplier) = \multiplier\cdot\subgrad_t = \multiplier\cdot(\avgconsumption - b_t(x_t))$.
Let $\stoptime$ denote the first time such that 
\begin{equation}
    \sum_{t=1}^\stoptime b_t(x_t) + \bmax \geq \resource,
\end{equation}
i.e. the first time at which the online algorithm comes sufficiently close to exhausting the resource.
Recall that each request $\request_t=(f_t,b_t,\constraints_t)$ is drawn i.i.d. from an unknown distribution $\dist$.
We wish to bound the expected regret
\begin{equation}\label{eq:simplest_regret_bound}
    \regret(T)=\ev_{\requests\sim\dist^T}\left[\opt(\requests) - \sum_{t=1}^T f_t(x_t)\right] \leq \ev_{\requests\sim\dist^T}\left[\opt(\requests) - \sum_{t=1}^{\stoptime} f_t(x_t)\right],
\end{equation}
where the inequality follows because we ignore any reward accumulated by the online algorithm after time $\stoptime$ (recall that $f_t(\cdot)\geq0$).
Next, we can bound the expected reward of the online algorithm in terms of the dual. 
Given a multiplier $\multiplier\geq0$, we define the following quantity:
\begin{align}
    \frac{1}{T}\ev_{\requests\sim\dist^T}\left[\dfunc(\multiplier\mid\requests)\right]
    &= \frac{1}{T} \ev_{\requests\sim\dist^T}\left[\sum_{t=1}^T \fdual_t(\multiplier) + \multiplier\cdot\resource\right]  \\
    &= \frac{1}{T} \sum_{t=1}^T \ev_{\request\sim\dist}\left[\fdual(\multiplier)\right] + \multiplier\cdot\avgconsumption \\
    &= \ev_{\request\sim\dist}\left[\fdual(\multiplier)\right] + \multiplier\cdot\avgconsumption \\
    &\defeq \avexdual(\multiplier),
\end{align}
i.e. the (averaged) expected dual objective value for a fixed multiplier $\multiplier$.
For all $t\leq \stoptime$ the action $x_t$ selected by \Cref{alg:robust} will not be constrained by the resource limit, i.e. $b_t(x_t)\leq \remaining_{t-1}$ will not be active, and so by the definition of $x_t$ in \Cref{alg:robust} we have
\begin{equation}\label{eq:f_decomp}
    f_t(x_t) = \fdual_t(\multiplier_t) + \multiplier_t\cdot b_t(x_t) = \fdual_t(\multiplier_t) + \multiplier_t\cdot\avgconsumption - \multiplier_t\cdot (\avgconsumption - b_t(x_t)).
\end{equation}
Given this, it can be shown that 
\begin{align}
    \ev\left[\sum_{t=1}^\stoptime f_t(x_t)\right] &= \ev\left[\sum_{t=1}^\stoptime \avexdual(\multiplier_t)\right] - \ev\left[\sum_{t=1}^\stoptime \multiplier_t\cdot\left(\avgconsumption - b_t(x_t)\right)\right] \nonumber \\
    &\geq \ev\left[\sum_{t=1}^\stoptime \avexdual(\bar{\multiplier})\right] - \ev\left[\sum_{t=1}^\stoptime \multiplier_t\cdot\left(\avgconsumption - b_t(x_t)\right)\right] \nonumber \\
    &= \ev\left[\stoptime\cdot \avexdual(\bar{\multiplier})\right] - \ev\left[\sum_{t=1}^\stoptime \onlinecost_t(\multiplier_t)\right], \label{eq:reward_bound}
\end{align}
where $\bar{\multiplier} = \frac{1}{\stoptime}\sum_{t=1}^\stoptime \multiplier_t$.
Here -- it is worth emphasizing -- that establishing the first equality is more involved than simply taking the expected value of the sum of \Cref{eq:f_decomp}; see \cite[Theorem 1]{balseiro2023best} for details.
The inequality follows by the definition of $\bar{\multiplier}$ and the fact that the dual function is convex.
We also emphasize that $\stoptime$ is a random variable, that depends on the request sequence. 
The final inequality follows from the definitions of $\onlinecost_t(\cdot)$ and $\avexdual(\cdot)$.
Substituting \Cref{eq:reward_bound} into \Cref{eq:simplest_regret_bound} we obtain
\begin{equation}\label{eq:regret_avexdual}
    \regret(T) \leq \ev\left[\opt(\requests) - \stoptime\cdot \avexdual(\bar{\multiplier}) + \sum_{t=1}^\stoptime \onlinecost_t(\multiplier_t)\right].
\end{equation}
Next, we can use weak duality to bound $\opt(\requests)$ in terms of the dual function:
\begin{align}
    \ev_{\requests\sim\dist^T}\left[\opt(\requests)\right] &= \frac{\stoptime}{T}\cdot\ev_{\requests\sim\dist^T}\left[\opt(\requests)\right] + \frac{T-\stoptime}{T}\cdot\ev_{\requests\sim\dist^T}\left[\opt(\requests)\right] \nonumber \\
    & \leq \frac{\stoptime}{T}\cdot\ev_{\requests\sim\dist^T}\left[\dfunc(\bar{\multiplier}\mid\requests)\right] + \frac{T-\stoptime}{T}\cdot T\cdot\fmax \nonumber \\
    &= \stoptime\cdot\avexdual(\bar{\multiplier}) + (T-\stoptime)\cdot\fmax.
\end{align}
Here, the inequality follows from weak duality and the fact that $f_t(\cdot)\leq \fmax$ for all $t$.
The final equality follows from the definition of $\avexdual(\cdot)$.
Substituting this into \Cref{eq:regret_avexdual} we obtain
\begin{equation}\label{eq:regret_in_terms_of_regret}
    \regret \leq \ev\left[(T-\stoptime)\cdot\fmax + \sum_{t=1}^\stoptime \onlinecost_t(\multiplier_t)\right].
\end{equation}

\subsubsection{Dynamic regret bound}\label{sec:regret_proof_dynamic_regret}
From \Cref{eq:regret_in_terms_of_regret} we see that it is necessary to bound the cost of the online dual problem, i.e. $\sum_{t=1}^\stoptime \onlinecost_t(\multiplier_t)$.
The results in \cite{balseiro2023best} use known results on online mirror descent to bound this quantity via bounds on the regret of the online dual problem:
\begin{equation}\label{eq:dual_regret}
    \sum_{t=1}^\stoptime \onlinecost_t(\multiplier_t) - \onlinecost_t(\multiplier)
\end{equation}
for an arbitrary $\multiplier\geq0$ (to be chosen judiciously later).
Our \Cref{alg:robust} does not use standard online mirror (gradient) descent, so these results to not apply directly. Instead, we will bound \eqref{eq:dual_regret} in terms of dynamic regret.
For arbitrary $\multiplier\geq0$ denote $\tvoptmult_t = \multiplier - \rolling_t$.
Then because
\begin{align}
    \sum_{t=1}^\stoptime \onlinecost_t(\multiplier_t) - \onlinecost_t(\multiplier) &= \sum_{t=1}^\stoptime \onlinecost_t(\rolling_t + \mdmultiplier_t) - \onlinecost_t(\rolling_t + \tvoptmult_t) \\
    &= \sum_{t=1}^\stoptime (\rolling_t + \mdmultiplier_t)\cdot\subgrad_t - (\rolling_t + \tvoptmult_t)\cdot\subgrad_t \\
    &= \sum_{t=1}^\stoptime \mdmultiplier_t\cdot\subgrad_t - \tvoptmult_t\cdot\subgrad_t 
\end{align}
we see that \eqref{eq:dual_regret} is equivalent to dynamic regret of the online gradient descent algorithm that generates $\mdmultiplier_t$ against the comparator sequence $\tvoptmult_t$. 
Note from \Cref{lem:ogd_dynamic} that $\mdmultiplier$ is updated via online gradient descent.
Note also that $\mdmultiplier_t\in\mdmultset_t = [-\rolling_t, \multmax - \rolling_t]$ for all $t$, so the diameter of the feasible set is $\diameter = \multmax$.
Also, because $\subgrad_t = \avgconsumption - b_t(x_t)$ we have $|\subgrad_t|\leq \gradientbound = \avgconsumption + \bmax$. 
Therefore, we can apply \Cref{lem:ogd_dynamic} to obtain
\begin{equation}\label{eq:dynamic_regret_bound}
    \sum_{t=1}^\stoptime \onlinecost_t(\multiplier_t) - \onlinecost_t(\multiplier) 
    = \sum_{t=1}^\stoptime \mdmultiplier_t\cdot\subgrad_t - \tvoptmult_t\cdot\subgrad_t
    \leq \frac{5\multmax^2}{2\stepsize}\cdot\pathlength(\tvoptmult) + \frac{\stepsize\cdot \stoptime\cdot(\avgconsumption + \bmax)^2}{2}.
\end{equation}
Next, by definition of $\tvoptmult_t$ we have that the path length of the comparator sequence is equal to the path length of the sequence of rolling averages:
\begin{equation*}
    \pathlength(\tvoptmult) = \sum_{t=1}^{\stoptime-1} |\tvoptmult_{t+1} - \tvoptmult_t| = \sum_{t=1}^{\stoptime-1} |\rolling_{t+1} - \rolling_t| = \pathlength(\rolling).
\end{equation*}

\subsubsection{Bounding the path length of the rolling average}
The rolling average estimate is given by
\begin{equation}
    \rolling_t = \frac{\sum_{\dindex=1}^{t-1} f_\dindex(x_\dindex)}{\sum_{\dindex=1}^{t-1} b_\dindex(x_\dindex)} = \frac{\algrew_{1:t-1}}{\algdeg_{1:t-1}}.
\end{equation}
Observe that
\begin{equation}
    \rolling_{t+1} - \rolling_t = \frac{\algrew_{1:t-1} + f_t(x_t)}{\algdeg_{1:t-1} + b_t(x_t)} - \frac{\algrew_{1:t-1}}{\algdeg_{1:t-1}} = \frac{f_t(x_t)-\rolling_t\cdot b_t(x_t)}{\algdeg_{1:t-1} + b_t(x_t)}
\end{equation}
Next observe that $\algrew_{1:t}\leq t\cdot\fmax$ and $\algdeg_{1:t}\geq t\cdot\bmin$, which implies that
\begin{equation}
    \rolling_t \leq \frac{\fmax}{\bmin}.
\end{equation}
Then
\begin{align}
    |\rolling_{t+1} - \rolling_t| = \frac{|f_t(x_t)-\rolling_t\cdot b_t(x_t)|}{\algdeg_{1:t-1} + b_t(x_t)} \leq \frac{\fmax + \rolling_t\cdot b_t(x_t)}{t\cdot \bmin} \leq \frac{\rollingconst}{t},
\end{align}
where 
\begin{equation}
    \rollingconst = \frac{\fmax + \frac{\fmax}{\bmin}\cdot\bmax}{\bmin} = \frac{\fmax}{\bmin}\cdot\left(1+\frac{\bmax}{\bmin}\right).
\end{equation}
So
\begin{equation}
    \pathlength(\rolling) = \sum_{t=1}^{\stoptime-1} |\rolling_{t+1} - \rolling_t| \leq \rollingconst\cdot\sum_{t=1}^{\stoptime-1} \frac{1}{t} = \rollingconst\cdot H_{\stoptime-1} \leq \rollingconst\cdot(\log(\stoptime-1) + 1) \leq \rollingconst\cdot(\log(T) + 1),
\end{equation}
where $H_n$ is the $n$-th harmonic number, and the final inequality follows from $\stoptime\leq T$. Substiuting this into the dynamic regret bound \Cref{eq:dynamic_regret_bound} gives
\begin{equation}\label{eq:dynamic_regret_bound_final}
    \sum_{t=1}^\stoptime \onlinecost_t(\multiplier_t) \leq \sum_{t=1}^\stoptime \onlinecost_t(\multiplier) + \frac{5\multmax^2}{2\stepsize}\cdot\rollingconst\cdot(\log(T) + 1) + \frac{\stepsize\cdot \stoptime\cdot(\avgconsumption + \bmax)^2}{2}.
\end{equation}

\subsubsection{Putting it all together}
Let us return to the bound in \Cref{eq:regret_in_terms_of_regret} and substitute in the dynamic regret bound from \Cref{eq:dynamic_regret_bound_final}.
\begin{equation}
    \regret(T) \leq \ev\left[(T-\stoptime)\cdot\fmax + \sum_{t=1}^\stoptime \onlinecost_t(\multiplier)\right] + \frac{\regretconstii}{\stepsize}\cdot(\log(T) + 1) + \regretconstiii\cdot\stepsize\cdot T,
\end{equation}
where $\regretconstii = (5\multmax^2\cdot\rollingconst)/2$ and $\regretconstiii = (\avgconsumption + \bmax)^2/2$.
and consider the following two cases.
First, suppose that $\stoptime<T$.
By choosing $\multiplier=\fmax/\avgconsumption$, and recalling that $\sum_{t=1}^\stoptime b_t(x_t) + \bmax\geq T\cdot\avgconsumption$, we have the following:
\begin{equation}\label{eq:multiplier_for_early_stop}
    \sum_{t=1}^\stoptime \onlinecost_t(\multiplier) = \sum_{t=1}^\stoptime \frac{\fmax}{\avgconsumption}\cdot(\avgconsumption - b_t(x_t)) \leq  \fmax\cdot\stoptime + \frac{\fmax}{\avgconsumption}\cdot\left(\bmax - T\cdot\avgconsumption \right) = \frac{\fmax}{\avgconsumption}\cdot\bmax - \fmax\cdot(T-\stoptime).
\end{equation}
Substituting this into the regret bound above gives
\begin{equation}
    \regret \leq \frac{\fmax}{\avgconsumption}\cdot\bmax + \frac{\regretconstii}{\stepsize}\cdot(\log(T) + 1) + \regretconstiii\cdot\stepsize\cdot T,
\end{equation}
which, recognizing that $\regretconsti = \fmax\cdot\bmax/\avgconsumption$, is the desired result.
Finally, suppose that $\stoptime=T$.
In this case, we can simply choose $\multiplier=0$ which renders $\sum_{t=1}^\stoptime \onlinecost_t(0)=0$ and produces the regret bound
\begin{equation}
    \regret \leq \frac{\regretconstii}{\stepsize}\cdot(\log(T) + 1) + \regretconstiii\cdot\stepsize\cdot T,
\end{equation}
which is obviously no larger than the desired bound.

\subsection{Proof of asymptotic competitive ratio}\label{sec:proof_cr}
In this section we present the proof of \Cref{thm:cr}, restated below:
\crthm
The proof follows a similar pattern to that of \Cref{sec:proof_regret}, using techniques from \cite{balseiro2023best}.
As in \Cref{sec:proof_regret}, the key is to show that the desired result holds when we can bound the regret of the online dual problem appropriately.
We then make use of the same dynamic regret bounds as in \Cref{sec:proof_regret}.

\subsubsection{Bounding competitive ratio with online dual problem}
For a fixed sequence of requests $\requests$, let $\xopt_t\in\constraints_t$ denote the offline optimal action of $\opt(\requests)$ at time $t$.
For any $t\leq\stoptime$ we have $x_t=\arg\max_{x\in\constraints_t} f_t(x) - \multiplier_t\cdot b_t(x)$, as the resource constraint is not active.
By the definition of $x_t$ we have
\begin{equation}\label{eq:cf_optimality_inequality}
    f_t(x_t) - \multiplier_t\cdot b_t(x_t) \geq f_t(\xopt_t) - \multiplier_t\cdot b_t(\xopt_t),
\end{equation}
and 
\begin{equation}\label{eq:cf_null_inequality}
    f_t(x_t) - \multiplier_t\cdot b_t(x_t) \geq -\multiplier_t\cdot \bmin,
\end{equation}
where the second inequality follows from $f_t(0)\geq0$ and $b_t(0)= \bmin>0$. 
For any $\cratio\geq1$ we have
\begin{align}
    \cratio\cdot f_t(x_t) &= f_t(x_t) + (\cratio - 1)\cdot f_t(x_t) \nonumber \\
    & \geq f_t(\xopt_t) + \multiplier_t\cdot (b_t(x_t) - b_t(\xopt_t)) + (\cratio - 1)\cdot \multiplier_t\cdot ( b_t(x_t) - \bmin) \pm \cratio\cdot\multiplier_t\cdot\avgconsumption  \nonumber \\
    & = f_t(\xopt_t) - \cratio\cdot\multiplier_t\cdot (\avgconsumption - b_t(x_t)) + \multiplier_t\cdot(\cratio\cdot \avgconsumption - b_t(\xopt_t) - (\cratio - 1)\cdot \bmin) \nonumber \\
\end{align}
Here the first inequality follows from \Cref{eq:cf_optimality_inequality} and \Cref{eq:cf_null_inequality}, and the simultaneous addition and subtraction of $\cratio\cdot\multiplier_t\cdot\avgconsumption$.
The equality is a simple rearrangement.
We can ensure non-negativity of the final term by choosing $\cratio$ such that
\begin{equation}
    \cratio\cdot \avgconsumption  - (\cratio - 1)\cdot \bmin \geq \bmax \geq b_t(\xopt_t) \impliedby \cratio \geq \frac{\bmax - \bmin}{\avgconsumption - \bmin}.
\end{equation}
For this choice of $\cratio$ summing over $t=1,\ldots,\stoptime$ gives
\begin{equation}\label{eq:cr_and_dual_cost}
    \cratio\cdot \sum_{t=1}^\stoptime f_t(x_t) \geq \sum_{t=1}^\stoptime f_t(\xopt_t) - \sum_{t=1}^\stoptime \multiplier_t\cdot (\avgconsumption - b_t(x_t)).
\end{equation}
This is the same inequality as in \cite[(18)]{balseiro2023best}, but with a different (worse) constant $\cratio$, due to the fact that $b_t(0)=\bmin>0$ in our setting.

\subsubsection{Putting it all together}
Recall from \Cref{sec:proof_regret} the notation $\onlinecost_t(\multiplier) = \multiplier\cdot(\avgconsumption - b_t(x_t))$.
For any $\multiplier\geq0$ we have
\begin{align*}
    \opt(\requests) - \cratio\cdot \sum_{t=1}^T f_t(x_t) & \leq \sum_{t=1}^T f_t(\xopt_t) - \cratio\cdot \sum_{t=1}^\stoptime f_t(x_t) \\
    & \leq \sum_{t=\stoptime+1}^T f_t(\xopt_t) + \cratio\cdot\sum_{t=1}^\stoptime \multiplier_t\cdot (\avgconsumption - b_t(x_t)) \\
    & \leq (T-\stoptime)\cdot\fmax + \cratio\cdot\sum_{t=1}^\stoptime \onlinecost_t(\multiplier)
\end{align*}
where the first inequality follows from $\stoptime\leq T$, the second from \Cref{eq:cr_and_dual_cost}, and the final inequality from $f_t(\cdot)\leq \fmax$ for all $t$.
Recall also the dynamic regret bound in \Cref{eq:dynamic_regret_bound_final}, which also holds in the adversarial setting. Specifically, the proof of \Cref{eq:dynamic_regret_bound_final} follows from standard dynamic regret bounds for online gradient descent, and the bound on the path length of the rolling average, which only requires that the requests be suitably bounded. Applying this dynamic regret bound gives
\begin{align}\label{eq:cr_final_bound}
    \opt(\requests) - \cratio\cdot \sum_{t=1}^T f_t(x_t) & \leq (T-\stoptime)\cdot\fmax + \cratio\cdot\sum_{t=1}^\stoptime \onlinecost_t(\multiplier) + \cratio\cdot\left( \frac{5\multmax^2}{2\stepsize}\cdot\rollingconst\cdot(\log(T) + 1) + \frac{\stepsize\cdot \stoptime\cdot(\avgconsumption + \bmax)^2}{2} \right).
\end{align}
As before, we consider two cases.
First, suppose that $\stoptime<T$.
Then by choosing $\multiplier=\fmax/(\cratio\cdot\avgconsumption)$ and repeating the arguments in \Cref{eq:multiplier_for_early_stop} we have
\begin{equation}
    \cratio\cdot \sum_{t=1}^\stoptime \onlinecost_t(\multiplier) \leq \frac{\fmax}{\avgconsumption}\cdot\bmax - \fmax\cdot(T-\stoptime).
\end{equation}
(We emphasize that this is nothing more than \Cref{eq:multiplier_for_early_stop} scaled by $\cratio$.)
Substituting this into \Cref{eq:cr_final_bound} gives
\begin{equation}
    \opt(\requests) - \cratio\cdot \sum_{t=1}^T f_t(x_t) \leq \frac{\fmax}{\avgconsumption}\cdot\bmax + \cratio\cdot\left( \frac{5\multmax^2}{2\stepsize}\cdot\rollingconst\cdot(\log(T) + 1) + \frac{\stepsize\cdot T\cdot(\avgconsumption + \bmax)^2}{2} \right),
\end{equation}
which is the desired result.
Finally, suppose that $\stoptime=T$.
In this case, we can simply choose $\multiplier=0$ which renders $\sum_{t=1}^\stoptime \onlinecost_t(0)=0$ and results in
\begin{equation}
    \opt(\requests) - \cratio\cdot \sum_{t=1}^T f_t(x_t) \leq \cratio\cdot\left( \frac{5\multmax^2}{2\stepsize}\cdot\rollingconst\cdot(\log(T) + 1) + \frac{\stepsize\cdot T\cdot(\avgconsumption + \bmax)^2}{2} \right),
\end{equation}
which is obviously no larger than the desired bound.

\subsection{Standard results on dynamic regret}
Consider the standard online convex optimization setting: at each time $t=1,\ldots,T$ an algorithm selects an action $x_t$ from a convex set $\convexset\subseteq\reals^n$, then observes a differentiable convex loss function $f_t:\convexset\to\reals$ and incurs loss $f_t(x_t)$.
Given a sequence $\lbrace u_t\rbrace_{t=1}^T$ of comparators, define the path length of the comparator sequence as
\begin{equation}
    \pathlength(u) = \sum_{t=1}^{T-1} \|u_{t+1} - u_t\| + 1.
\end{equation}
(The $+1$ means that a time-invariant comparator sequence has path length 1.)
The following result is standard in the online learning literature; see e.g. \cite[Theorem 10.1]{hazan2016introduction}
\begin{lemma}\label{lem:ogd_dynamic}
    Assume that $\|x-y\|\leq \diameter$ for all $x,y\in\convexset$ and $\|\nabla f_t(x)\|\leq \gradientbound$ for all $x\in\convexset$ and $t=1,\ldots,T$. 
    Online gradient descent with stepsize $\stepsize$ achieves the following dynamic regret bound against the comparator sequence $\lbrace u_t\rbrace_{t=1}^T$:
    \begin{equation}
        \sum_{t=1}^T f_t(x_t) - f_t(u_t) \leq \frac{5\diameter^2}{2\stepsize}\cdot\pathlength(u) + \frac{\stepsize\cdot T\cdot\gradientbound^2}{2}
    \end{equation}
\end{lemma}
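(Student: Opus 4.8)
The plan is to run the standard projected online gradient descent (OGD) analysis, with the one twist that the comparator sequence $\lbrace u_t\rbrace$ is allowed to vary with $t$. I would write the update as $x_{t+1} = \proj_{\convexset}(x_t - \stepsize\,\grad f_t(x_t))$, where $\proj_{\convexset}$ denotes Euclidean projection onto $\convexset$, and work throughout with the squared distances $\|x_t - u_t\|^2$. First I would linearize using convexity of $f_t$: for each $t$, $f_t(x_t) - f_t(u_t) \le \langle \grad f_t(x_t),\, x_t - u_t\rangle$, which reduces the claim to bounding $\sum_t \langle \grad f_t(x_t), x_t - u_t\rangle$.

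Next I would derive the per-step inequality. Since $u_t \in \convexset$ and Euclidean projection onto a convex set is non-expansive, $\|x_{t+1} - u_t\|^2 \le \|x_t - \stepsize\,\grad f_t(x_t) - u_t\|^2$; expanding the right-hand side and rearranging gives $\langle \grad f_t(x_t), x_t - u_t\rangle \le \tfrac{1}{2\stepsize}\big(\|x_t - u_t\|^2 - \|x_{t+1} - u_t\|^2\big) + \tfrac{\stepsize}{2}\|\grad f_t(x_t)\|^2$. The gradient bound controls the last term by $\stepsize\gradientbound^2/2$, and summing these over $t$ contributes the $\tfrac{\stepsize\,T\,\gradientbound^2}{2}$ term in the claim.

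The crux -- and the only genuine departure from the textbook static-regret proof -- is summing the quadratic differences $\tfrac{1}{2\stepsize}(\|x_t - u_t\|^2 - \|x_{t+1} - u_t\|^2)$ when the comparator moves. With a fixed comparator these telescope exactly; with a moving comparator I would re-index so that each $\|x_{t+1} - u_t\|^2$ is paired against $\|x_{t+1} - u_{t+1}\|^2$, leaving residual terms $\|x_{t+1} - u_{t+1}\|^2 - \|x_{t+1} - u_t\|^2$. Each residual is bounded, via Cauchy--Schwarz and the diameter assumption, by $2\diameter\,\|u_{t+1} - u_t\|$, and together with the boundary term $\|x_1 - u_1\|^2 \le \diameter^2$ this is exactly what introduces the comparator path length $\pathlength(u) = \sum_{t=1}^{T-1}\|u_{t+1} - u_t\| + 1$ (the $+1$ absorbing the boundary term). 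Collecting these contributions into a single, deliberately loose constant recovers the stated coefficient $\tfrac{5\diameter^2}{2\stepsize}$; since the lemma merely restates \cite[Theorem 10.1]{hazan2016introduction}, I would cite that bound for the exact constant rather than re-optimize it. I expect the only delicate point to be the index bookkeeping of the drift residuals; all remaining steps are routine.
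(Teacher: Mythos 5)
Your proof is correct, and it handles the one non-routine step by a genuinely different decomposition than the paper. Both arguments share the same skeleton (projection non-expansiveness, convexity linearization, the per-step inequality contributing $\stepsize T\gradientbound^2/2$), but they diverge in how the non-telescoping sum $\sum_t\bigl(\|x_t-u_t\|^2-\|x_{t+1}-u_t\|^2\bigr)$ is controlled. The paper expands each difference as $\|x_t\|^2-\|x_{t+1}\|^2+2u_t^\top(x_{t+1}-x_t)$, telescopes the squared norms, and applies summation by parts to shift the increments onto the comparator, yielding $\sum_{t\geq 2}x_t^\top(u_{t-1}-u_t)$ plus boundary terms $\|x_1\|^2$, $u_1^\top x_1$, $u_T^\top x_{T+1}$; bounding these five pieces separately is precisely what produces the coefficient $5\diameter^2/2$. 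You instead keep the squared distances intact, re-pair $\|x_{t+1}-u_t\|^2$ against $\|x_{t+1}-u_{t+1}\|^2$, and bound each drift residual by $2\diameter\|u_{t+1}-u_t\|$ via the difference-of-squares factorization and the reverse triangle inequality. Your route buys two things: a strictly sharper bound, $\frac{\diameter^2}{2\stepsize}+\frac{\diameter}{\stepsize}\sum_{t=1}^{T-1}\|u_{t+1}-u_t\|+\frac{\stepsize T\gradientbound^2}{2}$, and the fact that it uses only the diameter hypothesis as stated — the paper's bounds $\|x\|^2\leq\diameter^2$ and $|u^\top x|\leq\diameter^2$ implicitly require $0\in\convexset$ (or an unstated translation WLOG), which your argument sidesteps entirely. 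One small point you gloss over when loosening to the stated constant: absorbing $\frac{\diameter}{\stepsize}\sum\|u_{t+1}-u_t\|$ into $\frac{5\diameter^2}{2\stepsize}\cdot\pathlength(u)$ requires $\diameter\geq 2/5$; this mirrors the caveat the paper itself flags (it requires $\diameter\geq 1$ and otherwise replaces $\diameter$ by $1$), so you should state it rather than rely solely on the citation to \cite[Theorem 10.1]{hazan2016introduction}.
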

\begin{proof}
    Let $y_{t+1} = x_t - \stepsize\cdot\nabla f_t(x_t)$ be the unconstrained update,
    and $x_{t+1} = \proj_\convexset(y_{t+1})$ be the projection onto $\convexset$ giving actual decision at time $t+1$. First, observe that 
    \begin{equation}
        \|x_{t+1} - u_{t}\|^2 \leq \|y_{t+1} - u_t\|^2 = \|x_t - u_t\|^2 + \stepsize^2\|\nabla f_t(x_t)\|^2 - 2\stepsize\nabla f_t(x_t)^\top (x_t - u_t),
    \end{equation}
    where the inequality follows from the non-expansiveness of the projection operator.
    Rearranging and using $\|\nabla f_t(x)\|\leq \gradientbound$ gives
    \begin{equation}
        2\cdot\nabla f_t(x_t)^\top (x_t - u_t) \leq \frac{\|x_t - u_t\|^2 - \|x_{t+1} - u_t\|^2}{\stepsize} + \stepsize\cdot\gradientbound^2.
    \end{equation}
    By convexity of $f_t$, and the above inequality, we have
    \begin{equation}\label{eq:ogd_convexity_bound}
        2\cdot\left(f_t(x_t) - f_t(u_t)\right) \leq 2\cdot\left(\nabla f_t(x_t)^\top (x_t - u_t)\right) \leq \frac{\|x_t - u_t\|^2 - \|x_{t+1} - u_t\|^2}{\stepsize} + \stepsize\cdot\gradientbound^2.
    \end{equation}
    Inserting
    \begin{equation}
        \|x_t - u_t\|^2 - \|x_{t+1} - u_t\|^2 = \|x_t\|^2 - \|x_{t+1}\|^2 + 2\cdot u_t^\top (x_{t+1} - x_t)
    \end{equation}
    into \Cref{eq:ogd_convexity_bound} and summing over $t=1,\ldots,T$ gives
    \begin{align}
        2\cdot\sum_{t=1}^T \left(f_t(x_t) - f_t(u_t)\right) &\leq \sum_{t=1}^T \frac{\|x_t\|^2 - \|x_{t+1}\|^2 + 2\cdot u_t^\top (x_{t+1} - x_t)}{\stepsize} + \stepsize\cdot T\cdot\gradientbound^2 \\
        &= \frac{1}{\stepsize}\cdot\left(\|x_1\|^2 - \|x_{T+1}\|^2 + 2\sum_{t=2}^T x_t^\top (u_{t-1} - u_t) + 2u_1^\top x_1 - 2u_T^\top x_{T+1}\right) + \stepsize\cdot T\cdot\gradientbound^2. 
    \end{align}
    Using the bounds $\|x\|^2\leq \diameter^2$ and $|u^\top x|\leq \diameter^2$ for all $x,u\in\convexset$, as well as $|x^\top w|\leq \|x\|\cdot \|w\| \leq \diameter\|w\|$ we have 
    \begin{align}
        \sum_{t=1}^T \left(f_t(x_t) - f_t(u_t)\right) &\leq \frac{5\diameter^2}{2\stepsize} + \frac{\diameter}{\stepsize}\cdot\sum_{t=1}^{T-1} \|u_{t+1} - u_{t}\| + \frac{\stepsize\cdot T\cdot\gradientbound^2}{2} \\
        &\leq \frac{5\diameter^2}{2\stepsize}\cdot\left(1 + \sum_{t=1}^{T-1} \|u_{t+1} - u_{t}\|\right) + \frac{\stepsize\cdot T\cdot\gradientbound^2}{2} \\
        &= \frac{5\diameter^2}{2\stepsize}\cdot\pathlength(u) + \frac{\stepsize\cdot T\cdot\gradientbound^2}{2},
    \end{align}
    where the second inequality requires $\diameter\geq 1$. If $\diameter<1$, then the same bound holds with $\diameter$ replaced by $1$.
\end{proof}
\pagebreak 
\section{Proofs for learning-augmented algorithm}\label{sec:learning_proofs}

\subsection{Proof of recursive feasibility}

To prove consistency of \Cref{alg:learning_augmented}, we first prove that there always exists a feasible action for the algorithm to take, such that it reaches the endgame phase with constraints \Cref{eq:trivial-time-constraint,eq:non-trivial-time-constraint,eq:leading} satisfied. 
\begin{lemma}[Recursive feasibility]\label{lem:recursive}
    At any time $t-1$ assume that the constraints \Cref{eq:non-trivial-time-constraint,eq:trivial-time-constraint,eq:leading} are satisfied, and that none of the endgame conditions in \Cref{eq:endgame} are true, i.e. \alg\ has not yet entered the endgame phase.
    Then the action $x_t=x_t^\adv$ ensures that the constraints \Cref{eq:non-trivial-time-constraint,eq:trivial-time-constraint,eq:leading} are satisfied at time $t$.
\end{lemma}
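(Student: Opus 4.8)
The plan is to verify that the single choice $x_t=x_t^\adv$ (i.e.\ $\sel_t=0$) re-establishes all three constraints at time $t$, so that the feasible set of the maximization in line~5 is nonempty. Writing $\sigma_t := (1+\epsilon)\algrew_{1:t}-\advrew_{1:t}$ for the consistency slack, the first step is to record the effect of playing the advice. Since $x_t=x_t^\adv$ the per-round increments of $\alg$ and $\adv$ coincide, giving $\sigma_t=\sigma_{t-1}+\epsilon\cdot f_t(x_t^\adv)$, $\leading_t=\leading_{t-1}$, $\remaining_t=\remaining_{t-1}-b_t(x_t^\adv)$, and (writing $\timeremaining_t=T-t$) $\timeremaining_t=\timeremaining_{t-1}-1$. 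Two consequences of the no-endgame hypothesis are essential: failure of the no-advice endgame \Cref{eq:advice-exhausted-endgame} gives $\advremaining_{t-1}>0$, so by \Cref{ass:nullaction} we have $x_t^\adv\neq\nullaction$ and hence $b_t(x_t^\adv)\geq\bmin$; and failure of the resource-rich endgame \Cref{eq:resource-rich-endgame} gives $\remaining_{t-1}\leq\timeremaining_{t-1}\cdot\bmax$, so the active time constraint at $t-1$ is \Cref{eq:non-trivial-time-constraint}, i.e.\ $\sigma_{t-1}\geq -\epsilon\minrate(\remaining_{t-1}-\bmax+\bmin)$. Throughout I use \Cref{ass:bounded-rates} in the form $f_t(x_t^\adv)\geq\minrate\cdot b_t(x_t^\adv)\geq\minrate\bmin$. (Feasibility of $x_t^\adv$ as an action for $\alg$ also follows: if $\remaining_{t-1}>\bmax$ then $b_t(x_t^\adv)\leq\bmax<\remaining_{t-1}$, while if $\remaining_{t-1}\leq\bmax$ then failure of \Cref{eq:resource-poor-endgame} forces $\leading_{t-1}\leq 0$, whence $\remaining_{t-1}\geq\advremaining_{t-1}\geq b_t(x_t^\adv)$.)

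Next I would handle the time constraints by splitting on the sign of $\remaining_t-\timeremaining_t\cdot\bmax$. In the non-trivial case $\remaining_t\leq\timeremaining_t\cdot\bmax$, substituting $\sigma_t=\sigma_{t-1}+\epsilon f_t(x_t^\adv)$ together with the bound on $\sigma_{t-1}$ into \Cref{eq:non-trivial-time-constraint} at time $t$, all $\remaining$-terms telescope and the claim collapses to $f_t(x_t^\adv)\geq\minrate\cdot b_t(x_t^\adv)$, which is \Cref{ass:bounded-rates}. In the trivial case $\remaining_t>\timeremaining_t\cdot\bmax$ I instead verify \Cref{eq:trivial-time-constraint}; the same substitution reduces the claim to $f_t(x_t^\adv)\geq\minrate\bigl(\remaining_{t-1}-\bmax+\bmin-\bmax\cdot\timeremaining_t\bigr)$. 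Here failure of the resource-rich endgame gives $\remaining_{t-1}-\bmax\leq\timeremaining_t\cdot\bmax$, so the right-hand side is at most $\minrate\bmin$, which is dominated by $f_t(x_t^\adv)\geq\minrate\bmin$. This is precisely where the $-\bmax+\bmin$ tightening built into \Cref{eq:non-trivial-time-constraint} is consumed, and I expect this case to be the main obstacle: it is the one estimate that is not a mechanical cancellation, and it explains the otherwise mysterious tightening term.

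Finally, for \Cref{eq:leading}: if $\leading_t>0$ then $\leading_{t-1}=\leading_t>0$, so \Cref{eq:leading} held at $t-1$. Substituting $\sigma_t=\sigma_{t-1}+\epsilon f_t(x_t^\adv)$ and $\remaining_{t-1}-\remaining_t=b_t(x_t^\adv)$ reduces the required inequality to
\[
 \maxrate\bigl(\min\{\bmax,\remaining_{t-1}\}-\min\{\bmax,\remaining_t\}\bigr)+\epsilon\bigl(f_t(x_t^\adv)-\minrate\cdot b_t(x_t^\adv)\bigr)\ \geq\ 0 ,
\]
which is a sum of two manifestly non-negative terms: the first because $\min\{\bmax,\cdot\}$ is non-decreasing and $\remaining_t\leq\remaining_{t-1}$, the second by \Cref{ass:bounded-rates}.

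Together the three cases show that $x_t=x_t^\adv$ keeps all constraints satisfied, establishing recursive feasibility. The only genuinely delicate point is the trivial-time case; the remaining reductions are routine telescoping cancellations driven by the $\minrate\cdot b\leq f$ bound.
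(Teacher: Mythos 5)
Your proof is correct and follows essentially the same route as the paper's: the same three-way case analysis (time constraint preserved, transition into the resource-rich region consuming the $\bmax-\bmin$ tightening, and the leading constraint via monotonicity of $\min\{\bmax,\cdot\}$), merely repackaged through the slack variable $\sigma_t$ so each case collapses to $f_t(x_t^\adv)\geq\minrate\cdot b_t(x_t^\adv)$. If anything you are slightly more careful than the paper, since you explicitly verify feasibility of $x_t^\adv$ under the resource constraint and explicitly invoke \Cref{ass:nullaction} to get $b_t(x_t^\adv)\geq\bmin$ (which the paper uses implicitly in its Case 2); the paper additionally proves feasibility at $t=1$, but that base case is not part of the lemma as stated.
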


\subsubsection{Case 1: ample time remaining}
Assume that at time $t-1$ we have \Cref{eq:non-trivial-time-constraint} satisfied, i.e.
\begin{equation*}
    (1+\epsilon)\cdot\algrew_{1:t-1} + \epsilon\cdot\minrate\cdot(\remaining_{t-1} - \bmax + \bmin) \geq \advrew_{1:t-1},
\end{equation*}
with $\remaining_{t-1} \leq \timeremaining_{t-1}\cdot\bmax$.
<advice is feasible>
Assume also that, after playing the feasible action $x_t=x_t^\adv$, we still have 
\begin{equation}\label{eq:still_time}
    \remaining_t = \remaining_{t-1} - \advdeg_t \leq \timeremaining_t\cdot\bmax = (\timeremaining_{t-1}-1)\cdot\bmax.
\end{equation}
With $x_t=x_t^\adv$, \Cref{eq:non-trivial-time-constraint} at time $t$ becomes:
\begin{align*}
    (1+\epsilon)\cdot\algrew_{1:t} + \epsilon\cdot\minrate\cdot(\remaining_{t} - \bmax + \bmin)
    &= (1+\epsilon)\cdot\algrew_{1:t-1} + (1+\epsilon)\cdot\advrew_t + \epsilon\cdot\minrate\cdot(\remaining_{t-1} - \advdeg_t - \bmax + \bmin) \\
    &\geq \advrew_{1:t-1} + \advrew_t + \epsilon\cdot(\advrew_t - \minrate\cdot\advdeg_t) \\
    &\geq \advrew_{1:t},
\end{align*}
which, when taken together with \Cref{eq:still_time}, implies that \Cref{eq:non-trivial-time-constraint} is satisfied at time $t$.
Here, the first inequality follows from satisfaction of \Cref{eq:non-trivial-time-constraint} at time $t-1$, and the final inequality from the definiton of $\minrate$, which implies that $\advrew_t \geq \minrate\cdot\advdeg_{t} $.

\subsubsection{Case 2: time begins to run out}
Once again, assume that at time $t-1$ we have \Cref{eq:non-trivial-time-constraint} satisfied, with $\remaining_{t-1} \leq \timeremaining_{t-1}\cdot\bmax$.
This time, however, assume that
\begin{equation}\label{eq:time_begins_to_run_out}
    \remaining_t = \remaining_{t-1} - \advdeg_t > \timeremaining_t\cdot\bmax = (\timeremaining_{t-1}-1)\cdot\bmax,
\end{equation}
i.e. after playing the feasible action $x_t=x_t^\adv$, we enter the resource-rich endgame phase.
With $x_t=x_t^\adv$, we have 
\begin{align*}
    (1+\epsilon)\cdot\algrew_{1:t} + \epsilon\cdot\minrate\cdot\bmax\cdot\timeremaining_{t}
    &= (1+\epsilon)\cdot\algrew_{1:t-1} + (1+\epsilon)\cdot\advrew_t + \epsilon\cdot\minrate\cdot\bmax\cdot(\timeremaining_{t-1}-1) \\
    &\geq (1+\epsilon)\cdot\algrew_{1:t-1}  + (1+\epsilon)\cdot\advrew_t + \epsilon\cdot\minrate\cdot\bmax\cdot\remaining_{t-1} - \epsilon\cdot\minrate\cdot\bmax   \\
    &= (1+\epsilon)\cdot\algrew_{1:t-1} + (1+\epsilon)\cdot\advrew_t + \epsilon\cdot\minrate\cdot\bmax\cdot(\remaining_{t-1} - \bmax + \bmin)  - \epsilon\cdot\minrate\cdot\bmin \\
    & \geq \advrew_{1:t-1} + \advrew_t + \epsilon\cdot(\advrew_t - \minrate\cdot\bmin) \\
    & \geq \advrew_{1:t},
\end{align*}
which, when taken together with \Cref{eq:time_begins_to_run_out}, implies that \Cref{eq:trivial-time-constraint} is satisfied at time $t$.
Here the first inequality follows from $\remaining_{t-1} \leq \timeremaining_{t-1}\cdot\bmax$ at time $t-1$,
the third inequality follows from satisfaction of \Cref{eq:non-trivial-time-constraint} at time $t-1$, and the final inequality from the definiton of $\minrate$.

\subsubsection{Case 3: \alg\ has consumed more than \adv}

Assume that at time $t-1$ we had $\leading_{t-1}>0$, $\remaining_{t-1}\geq \bmax$, and that \Cref{eq:leading} was satisfied, i.e.
\begin{align*}
    (1+\epsilon)\cdot\algrew_{1:t-1} + \epsilon\cdot\minrate\cdot\remaining_{t-1} 
\geq \advrew_{1:t-1} + \maxrate\cdot(\min\lbrace \bmax, \remaining_{t-1} \rbrace +\leading_{t-1}).
\end{align*}
First observe that $\remaining_{t-1} \geq \bmax$ means that $x_t=x_t^\adv$ will not violate the resource constraint, as $\advdeg_t \leq \bmax$ by definition. 
Next, observe that taking action $x_t=x_t^\adv$ results in $\leading_t = \leading_{t-1}$ as
\begin{equation}\label{eq:leading_still_positive}
    \leading_t = \algdeg_{1:t} - \advdeg_{1:t} = \algdeg_{1:t-1} + \advdeg_t - (\advdeg_{1:t-1} + \advdeg_t) = \leading_{t-1} > 0.
\end{equation}
Note also that $\remaining_t = \remaining_{t-1} - \advdeg_t$ when $x_t=x_t^\adv$.
Finally, note that 
\begin{equation}\label{eq:trivial_inequality}
    \min\{ \bmax, \remaining_{t-1} \} \geq \min\{ \bmax, \remaining_{t-1} - \advdeg_t \} 
\end{equation}
Now, consider \Cref{eq:leading} at time $t$:
\begin{align*}
    (1+\epsilon)\cdot(\algrew_{1:t}) + \epsilon\cdot\minrate\cdot\remaining_{t}
    &= (1+\epsilon)\cdot\algrew_{1:t-1} + (1+\epsilon)\cdot\advrew_t + \epsilon\cdot\minrate\cdot(\remaining_{t-1} - \advdeg_t) \\
    &\geq \advrew_{1:t-1} + \maxrate\cdot(\min\{ \bmax, \remaining_{t-1} \} + \leading_{t-1}) + (1+\epsilon)\cdot\advrew_t - \epsilon\cdot\minrate\cdot\advdeg_t \\
    &\geq \advrew_{1:t-1} + \maxrate\cdot(\min\{ \bmax, \remaining_{t-1} - \advdeg_t \} + \leading_{t-1}) + (1+\epsilon)\cdot\advrew_t - \epsilon\cdot\minrate\cdot\advdeg_t \\
    & = \advrew_{1:t} + \maxrate\cdot(\min\{ \bmax, \remaining_{t} \} + \leading_{t} ) + \epsilon\cdot(\advrew_t - \minrate\cdot\advdeg_t)  \\
    &\geq \advrew_{1:t} + \maxrate\cdot(\min\{ \bmax, \remaining_{t} \} + \leading_{t} ),
\end{align*}
which, when taken with \Cref{eq:leading_still_positive}, implies that \Cref{eq:leading} is satisfied at time $t$.
Here, the first inequality follows from satisfaction of \Cref{eq:leading} at time $t-1$, and the second inequality follows from \Cref{eq:trivial_inequality}. The second-to-last identity is a simple rearrangement, and the final inequality follows from the fact that $\advrew_t \geq \minrate\cdot\advdeg_{t} $ by definition of $\minrate$.

\subsubsection{Feasibility at first time step}
At the beginning of the first time step ($t=1$) we have $\leading_0=0$, so
the action $x_1=x_1^\adv$ is feasible and preserves $\leading_1=0$.
Suppose (unlikely though it may be) that
\begin{equation}\label{eq:b1_big}
    \remaining_1 = \resource - \advdeg_1 > \timeremaining_1\cdot\bmax.
\end{equation}
It is clear that \Cref{eq:trivial-time-constraint} is (trivially) satisfied at time $t=1$:
\begin{align*}
    (1+\epsilon)\cdot\algrew_{1} + \epsilon\cdot\minrate\cdot\bmax\cdot\timeremaining_{1}
    \geq (1+\epsilon)\cdot\advrew_1 \geq \advrew_1,
\end{align*}
so we enter the resource-rich endgame phase immediately.
Alternatively (and more likely), suppose that
\begin{equation}\label{eq:sufficient_initial_time}
    \remaining_1 = \resource - \advdeg_1 \leq \timeremaining_1\cdot\bmax.
\end{equation}
Then considering \Cref{eq:non-trivial-time-constraint} at time $t=1$:
\begin{align*}
    (1+\epsilon)\cdot \algrew_{1} + \epsilon\cdot\minrate\cdot(\remaining_{1} - \bmax + \bmin)
    & = (1+\epsilon)\cdot\advrew_1 + \epsilon\cdot\minrate\cdot(\resource - \advdeg_1 - \bmax + \bmin) \\
    & = \advrew_1 + \epsilon\cdot(\advrew_1 - \minrate\cdot\advdeg_1) + \epsilon\cdot\minrate\cdot(\resource - \bmax + \bmin) \\
    & \geq \advrew_1,
\end{align*}
which, when taken together with \Cref{eq:sufficient_initial_time}, implies that \Cref{eq:non-trivial-time-constraint} is satisfied at time $t=1$.
Notice that the inequality holds when $\resource \geq \bmax - \bmin$, which is true in any realistic problem instance.

\subsection{Proof of consistency}\label{sec:proof_consistency}

In this section we present the proof of \Cref{claim:consistent}, restated below:
\consistent

Consider \Cref{alg:learning_augmented} at the beginning of time $t$. Let us enumerate all possibilities to show that (eventually) consistency will be achieved by the end of the problem instance.

\subsubsection{Case 1: resource-rich endgame}
Assume that after time $t-1$, \alg\ satisfies \Cref{eq:resource-rich-endgame}, i.e.
\begin{equation*}
    \remaining_{t-1} > \timeremaining_{t-1}\cdot\bmax,
\end{equation*}
which implies there is sufficient resource remaining for \alg\ to play greedily for all remaining time steps.
Indeed, by definition of \Cref{alg:learning_augmented}, \alg\ will now play $x_\tau$ given by the opportunity cost policy with $\multiplier_\tau=0$ for all $\tau\geq t$.
Let $\algrew_{t:T}^0$ denote the reward that \alg\ accumulates from time $t$ to the end of the instance at time $T$.
Notice that this is the maximum possible reward that any algorithm could accumulate from time $t$ to time $T$, as this corresponds to greedily maximizing reward without regard to resource consumption.
This reward can be lower bounded by 
\begin{equation}\label{eq:endgame_lower_bound}
    \algrew_{t:T}^0 \geq \minrate\cdot\bmax\cdot\timeremaining_{t-1},
\end{equation}
as the RHS corresponds to worst-case reward instances $f_\tau = \minrate\cdot b_\tau$ for all $\tau\geq t$,
and $\multiplier_\tau=0$ implies that \alg\ will play $b_\tau=\bmax$ for all $\tau\geq t$.
By \Cref{lem:recursive}, we know that \Cref{eq:trivial-time-constraint} is satisfied at time $t-1$, i.e.
\begin{equation*}
    (1+\epsilon)\cdot\algrew_{1:t-1} + \epsilon\cdot\minrate\cdot\bmax\cdot\timeremaining_{t-1} \geq \advrew_{1:t-1}.
\end{equation*}
The total reward accumulated by \alg\ from time $1$ to time $T$ is given by
\begin{align*}
    (1+\epsilon)\cdot\algrew_{1:T} &= (1+\epsilon)\cdot(\algrew_{1:t-1} + \algrew_{t:T}^0) \\
    &\geq (1+\epsilon)\cdot\algrew_{1:t-1} + \epsilon\cdot\minrate\cdot\bmax\cdot\timeremaining_{t-1} + \algrew_{t:T}^0 \\
    &\geq \advrew_{1:T},
\end{align*}
i.e. $(1+\epsilon)$ consistency.
Here the first inequality follows from \Cref{eq:endgame_lower_bound}, and the second inequality follows from the fact that \Cref{eq:trivial-time-constraint} holds at time $t-1$ and because $\algrew_{t:T}^0$ is the maximum possible reward that any algorithm could accumulate from time $t$ to time $T$.

\subsubsection{Case 2: resource-poor endgame} 
Assume that at the end of time $t-1$, \alg\ satisfies \Cref{eq:resource-poor-endgame}, i.e. $\leading_{t-1}>0$ and $\remaining_{t-1}\leq\bmax$, with $\advdeg_{1:t-1}<\resource$.
Let us assume that this is the first time $\leading_{t-1}>0$ and $\remaining_{t-1}\leq\bmax$ has occured.
As $\remaining_{t-1}\leq \bmax \leq \bmax\cdot\timeremaining_{t-1}$ (because $\timeremaining_{t-1}\geq 1$ for all $t\leq T$), we know that \alg\ will consume all remaining resource by time $T$ by playing $\multiplier_\tau=0$ for all $\tau\geq t$. 
Again, let $\algrew_{t:T}^0$ denote the reward that \alg\ accumulates from time $t$ to the end of the instance at time $T$ by playing $\multiplier_\tau=0$ for all $\tau\geq t$, which satisfies
\begin{equation}\label{eq:resource_poor_lower_bound}
    \algrew_{t:T}^0 \geq \minrate\cdot\remaining_{t-1},
\end{equation}
as the RHS corresponds to worst-case reward instances $f_\tau = \minrate\cdot b_\tau$ for all $\tau\geq t$.
As $\leading_{t-1}>0$, by \Cref{lem:recursive} we know that \Cref{eq:leading} holds at time $t-1$, i.e.,
\begin{equation*}
    (1+\epsilon)\cdot\algrew_{1:t-1} + \epsilon\cdot\minrate\cdot\remaining_{t-1} \geq \advrew_{1:t-1} + \maxrate\cdot(\remaining_{t-1}+\leading_{t-1}),
\end{equation*}
where we have used $\min\lbrace \bmax, \remaining_{t-1} \rbrace = \remaining_{t-1}$ by assumption.
Finally, recall that the remaining capacity for $\adv$ is 
\begin{equation*}
    \advremaining_{t-1} = \resource - \algdeg_{1:t-1} + \algdeg_{1:t-1} - \advdeg_{1:t-1} = \remaining_{t-1} + \leading_{t-1}.
\end{equation*}
Then the total reward accumulated by \alg\ from time $1$ to time $T$ is given by
\begin{align*}
    (1+\epsilon)\cdot\algrew_{1:T} &= (1+\epsilon)\cdot(\algrew_{1:t-1} + \algrew_{t:T}^0) \\
    &\geq (1+\epsilon)\cdot\algrew_{1:t-1} + \epsilon\cdot\minrate\cdot\remaining_{t-1} \\
    &\geq \advrew_{1:t-1} + \maxrate\cdot(\remaining_{t-1}+\leading_{t-1}) \\
    &= \advrew_{1:t-1} + \maxrate\cdot\advremaining_{t-1} \\
    &\geq \advrew_{1:t-1} + \advrew_{t:T} \\
    &= \advrew_{1:T},
\end{align*}
i.e. $(1+\epsilon)$ consistency.
Here the first inequality follows from \Cref{eq:resource_poor_lower_bound}, the second inequality follows from the fact that \Cref{eq:leading} holds at time $t-1$, the equality follows from the definition of $\advremaining_{t-1}$, and the final inequality follows 
from the fact that $\maxrate\cdot\advremaining_{t-1}$ is an upper bound on the reward that \adv\ could accumulate from time $t$ to time $T$ with remaining capacity $\advremaining_{t-1}$.

\subsubsection{Case 3: no-advice endgame}
Assume that after time $t-1$, \Cref{eq:advice-exhausted-endgame} is satisfied, i.e. we have $\advdeg_{1:t-1}=\resource$.
Note that $\leading_{t-1} \leq 0$, as $\leading_{t-1} > 0$ would imply that $\algdeg_{1:t-1} > \advdeg_{1:t-1} = \resource$, which is impossible.        
Next, observe that the remaining capacity is
\begin{equation*}
    \remaining_{t-1} = \resource - \algdeg_{1:t-1} = \resource - \advdeg_{1:t-1} + \advdeg_{1:t-1} - \algdeg_{1:t-1} = -\leading_{t-1}
\end{equation*}
by definition. 
We may assume that 
\begin{equation}\label{eq:will_exhaust_resource}
    \remaining_{t-1} \leq \timeremaining_{t-1}\cdot\bmax,
\end{equation}
otherwise Case 1 (resource-rich endgame) would apply.
Then, by \Cref{lem:recursive}, we know that \Cref{eq:non-trivial-time-constraint} holds at time $t-1$, i.e.
\begin{equation*}
    (1+\epsilon)\cdot\algrew_{1:t-1} + \epsilon\cdot\minrate\cdot(\remaining_{t-1} - \bmax + \bmin) \geq \advrew_{1:t-1}.
\end{equation*}
Playing $\multiplier_\tau=0$ for all $\tau\geq t$ ensures that \alg\ will exhaust its remaining resource budget by the end of the instance at time $T$, by \Cref{eq:will_exhaust_resource}.
Again, let $\algrew_{t:T}^0$ denote the reward that \alg\ accumulates from time $t$ to the end of the instance at time $T$ by playing $\multiplier_\tau=0$ for all $\tau\geq t$,
and note that
\begin{equation}\label{eq:exhaust_lower_bound}
    \algrew_{t:T}^0 \geq \minrate\cdot\remaining_{t-1},
\end{equation}
as the RHS corresponds to worst-case reward instances $f_\tau = \minrate\cdot b_\tau$ for all $\tau\geq t$.
Then the total reward accumulated by \alg\ from time $1$ to time $T$ is given by
\begin{subequations}\label{eq:consistency_advice_exhausted}
    \begin{align}
        (1+\epsilon)\cdot\algrew_{1:T} &= (1+\epsilon)\cdot(\algrew_{1:t-1} + \algrew_{t:T}^0) \\
        &\geq (1+\epsilon)\cdot\algrew_{1:t-1} + \epsilon\cdot\minrate\cdot(\remaining_{t-1}) + \algrew_{t:T}^0 \\
        &\geq (1+\epsilon)\cdot\algrew_{1:t-1} + \epsilon\cdot\minrate\cdot(\remaining_{t-1}) - \epsilon\cdot\minrate\cdot(\bmax - \bmin) + \algrew_{t:T}^0 \\
        &\geq \advrew_{1:t-1} + \algrew_{t:T}^0 \\
        &\geq \advrew_{1:T},
    \end{align}
\end{subequations}
i.e. $(1+\epsilon)$ consistency.
Here the first inequality follows from \Cref{eq:exhaust_lower_bound}, the second inequality follows because $\bmax - \bmin \geq 0$, the third inequality follows from the fact that \Cref{eq:non-trivial-time-constraint} holds at time $t-1$, 
and the final inequality follows because $\advrew_{1:t-1} = \advrew_{1:T}$, as \adv\ can accumulate no additional reward after time $t-1$ having exhausted its resource budget.
Note that the above arguments go through unchanged even if $\remaining_{t-1}=0$, i.e. \alg\ has also fully consumed its resource budget by time $t-1$.

\begin{remark}
    It may not be necessary for \alg\ to play greedily (i.e. $\multiplier_\tau=0$) from time $t$ to time $T$ to achieve consistency in this case.
    For example,
    \adv\ could continue to follow the \rob\ algorithm for some time. 
    The risk is that \alg\ may fail to accumulate at least $\minrate\cdot\remaining_{t-1}$ reward from time $t$ to time $T$, which can only occur if \alg\ fails to fully consume its remaining resource budget by time $T$.
    Taking $\multiplier_\tau=0$ for all $\tau\geq t$ is a simple way to guarantee that \alg\ fully consumes its remaining resource budget by time $T$, given that \Cref{eq:advice-exhausted-endgame} has been satisfied before \Cref{eq:resource-rich-endgame}.
\end{remark}

\subsubsection{Case 4: time's up endgame}
Assume that $t=T$.
First, we may assume that 
\begin{equation*}
    \remaining_{T-1} \leq \bmax\cdot\timeremaining_{T-1} = \bmax,
\end{equation*}
otherwise, if $\remaining_{T-1} > \bmax$, Case 1 (resource-rich endgame) would apply.
We may also assume that $\leading_{T-1}\leq0$.
Otherwise, because $\remaining_{T-1}\leq \bmax$, if $\leading_{T-1}>0$ then Case 2 (resource-poor endgame) would apply.
Finally, we may assume that $\advdeg_{1:T-1}<\resource$, otherwise Case 3 (\adv\ exhausts resource) would apply.
The remaining analysis is (almost) identical to Case 3.
As $\remaining_{T-1} \leq \bmax\cdot\timeremaining_{T-1}$, by \Cref{lem:recursive} we know that \Cref{eq:non-trivial-time-constraint} holds at time $T-1$,
and by playing $\multiplier_T=0$ \alg\ will accumulate $\algrew_T^0 \geq \minrate\cdot\remaining_{T-1}$ reward at the final time step.
Repeating the arguments in \Cref{eq:consistency_advice_exhausted} leads to 
\begin{equation*}
    (1+\epsilon)\cdot\algrew_{1:T} \geq \advrew_{1:T-1} + \algrew_T^0 \geq \advrew_{1:T},
\end{equation*}
i.e. $(1+\epsilon)$ consistency.
Here the first inequality is \Cref{eq:consistency_advice_exhausted}.
The second inequality follows from the fact that $\algrew_T^0$ is the maximum possibly reward that can be attained with capacity $\remaining_{T-1}$ at time $T$, and $\advremaining_{T-1} = \remaining_{T-1} + \leading_{T-1} \leq \remaining_{T-1}$, as $\leading_{T-1}\leq0$.

\subsubsection{Case 5: none of the above}
Cases 1 to 4 cover all endgame scenarios \Cref{eq:resource-rich-endgame,eq:resource-poor-endgame,eq:advice-exhausted-endgame,eq:time-up-endgame}, respectively.
If none of these cases apply at time $t-1$, then \alg\ will continue to line 5 of \Cref{alg:learning_augmented}.
\Cref{lem:recursive} guarantees that a feasible action exists.
\alg\ will continue taking actions according to line 5 and 6, until one of the endgame conditions occurs.

\subsection{Proof of necessity of constraints}\label{sec:proof_necessity}
In this section we present the proof of \Cref{claim:necessary}, restated below:
\necessary
To see this, let us consider each constraint in turn.

\subsubsection{Violation of \Cref{eq:leading}}

Suppose that after some arbitrary time $t$ we have violated \Cref{eq:leading}, i.e. we have $\leading_{t} > 0$ but
\begin{equation*}
    (1+\epsilon)\cdot\algrew_{1:t} + \epsilon\cdot\minrate\cdot\remaining_{t} = \advrew_{1:t} + \maxrate\cdot(\remaining_{t}+\leading_{t}) - \eta,
\end{equation*}
for some strictly positive $\eta>0$.
Note that we have here assumed that $\min\lbrace \bmax, \remaining_{t} \rbrace = \remaining_{t}$. 
Now suppose that the adversary chooses worst-case (minimum reward) requests until \alg\ exhausts its capacity, and then switches to best-case (maximum reward) requests for the remainder of the instance.
In this case, \alg\ will accumulate $\algrew_{t+1:T}=\minrate\cdot\remaining_{t}$ reward from time $t+1$ to time $T$, while \adv\ will accumulate $\advrew_{t+1:T}=\minrate\cdot\remaining_{t} + \maxrate\cdot\leading_{t}$ reward from time $t+1$ to time $T$, using its additional capacity $\leading_{t} > 0$. In this case, the total reward accumulated by \alg\ is:
\begin{align*}
    (1+\epsilon)\cdot\algrew_{1:T} & = (1+\epsilon)\cdot(\algrew_{1:t} + \algrew_{t+1:T}) \\
    & = (1+\epsilon)\cdot\algrew_{1:t} + \epsilon\cdot\minrate\cdot\remaining_{t} + \minrate\cdot\remaining_{t} \\
    & = \advrew_{1:t} + \maxrate\cdot(\remaining_{t}+\leading_{t}) - \eta + \minrate\cdot\remaining_{t} \\
    & = \advrew_{1:t} + \minrate\cdot\remaining_{t} + \maxrate\cdot\leading_{t} + \maxrate\cdot\remaining_t - \eta \\
    & = \advrew_{1:T} + \maxrate\cdot\remaining_t - \eta,
\end{align*}
where the second identity uses $\algrew_{t+1:T}$, the third identity uses violation of \Cref{eq:leading} at time $t$, and the final identity uses $\advrew_{t+1:T}$.
Now observe that for any given $\eta>0$, the adversary can design an instance such that $\remaining_t$ is arbitrarily small, specifically such that $\maxrate\cdot\remaining_t < \eta$.
Then consistency will be violated. 

\subsubsection{Violation of \Cref{eq:trivial-time-constraint}}

Suppose that after some arbitrary time $t$ we have violated \Cref{eq:trivial-time-constraint}, i.e. we have $\remaining_{t} > \timeremaining_{t}\cdot\bmax$ but
\begin{equation}
    (1+\epsilon)\cdot\algrew_{1:t} + \epsilon\cdot\minrate\cdot\bmax\cdot\timeremaining_{t} < \advrew_{1:t}.
\end{equation}
Suppose the adversary selects requests with `worst-case' rewards $f_\tau(x) = \minrate\cdot b_\tau(x)$ for all $\tau>t$. 
Then, conditioned on this state, the maximum possible total reward that \alg\ can accumulate is
\begin{align*}
    (1+\epsilon)\cdot\algrew_{1:T} & = (1+\epsilon)\cdot(\algrew_{1:t} + \algrew_{t+1:T}) \\
    & \leq (1+\epsilon)\cdot\algrew_{1:t} + (1+\epsilon)\cdot\minrate\cdot\bmax\cdot\timeremaining_{t} \\
    & < \advrew_{1:t} + \minrate\cdot\bmax\cdot\timeremaining_{t} \\
    & = \advrew_{1:T},
\end{align*}
i.e. violation of $(1+\epsilon)$ consistency.
Here the first inequality follows from the fact that with these worst-case rewards, no algorithm can accumulate more than $\minrate\cdot\bmax$ reward at each remaining time step, and the second inequality follows from violation of \Cref{eq:trivial-time-constraint} at time $t$.
The final identity assumes that \adv\ accumulates maximum reward $\minrate\cdot\bmax\cdot\timeremaining_{t}$ for the remainder of the instance.

\subsubsection{Violation of \Cref{eq:non-trivial-time-constraint}}

The necessity of \Cref{eq:non-trivial-time-constraint} stems from the necessity of \Cref{eq:trivial-time-constraint}, as shown above.
Suppose that after arbitrary time $t$ we have violated \Cref{eq:non-trivial-time-constraint}, in particular suppose that we have 
\begin{equation}\label{eq:time_tight}
    \remaining_{t} = \timeremaining_{t}\cdot\bmax,
\end{equation}
but 
\begin{equation}\label{eq:violate_non_trivial}
    (1+\epsilon)\cdot\algrew_{1:t} + \epsilon\cdot\minrate\cdot(\remaining_{t} - \bmax + \bmin) = \advrew_{1:t} - \eta,
\end{equation}
for some strictly positive $\eta>0$.
Let us also assume $\timeremaining_{t}\gg 1$, such that $\remaining_{t} > \bmax$.
Notice that
\begin{equation}\label{eq:comparison_inequality}
    \advrew_{1:t} + \maxrate\cdot(\bmax + \leading_{t}) > 
    \advrew_{1:t} + \epsilon\cdot\minrate\cdot(\bmax - \bmin), 
\end{equation}
unless $\epsilon$ is chosen to be unreasonably large.
As we have assumed \Cref{eq:violate_non_trivial}, the inequality in \Cref{eq:comparison_inequality} implies that \Cref{eq:leading} cannot be satisfied at time $t$.
This implies that we must have $\leading_{t}\leq0$. 
Let us assume that $\leading_t=0$.  
Now suppose that the adversary selects a request at time $t+1$ such that the advice takes an action that consumes resource $\advdeg_{t+1}=\bmin$ and generates reward $\advrew_{t+1}=\minrate\cdot\bmin$.
By the same argument at time $t$, we must also have $\leading_{t+1}\leq 0$, otherwise with $\leading_{t+1}>0$ we would violate \Cref{eq:leading} at time $t+1$. 
As \adv\ consumes the minimum possible amount of resource, to ensure $\leading_{t+1}\leq0$ we must have \alg\ select the same action such that $\algdeg_{t+1}=\advdeg_{t+1}=\bmin$.
This implies that $\leading_{t+1}=\leading_t=0$.
Now that we have established that any action at time $t+1$ must consume exactly $\bmin$ resource, we have
\begin{equation*}
    \remaining_{t+1} = \remaining_{t} - \bmin = \timeremaining_{t}\cdot\bmax - \bmin = (\timeremaining_{t+1}+1)\cdot\bmax - \bmin > \timeremaining_{t+1}\cdot\bmax,
\end{equation*}
i.e. at time $t+1$ we have entered the resource-rich endgame. 
From Case 2 above, we know that \Cref{eq:trivial-time-constraint} must be satisfied at time $t+1$.
However, we have
\begin{align*}
    (1+\epsilon)\cdot\algrew_{1:t+1} + \epsilon\cdot\minrate\cdot\bmax\cdot\timeremaining_{t+1} 
    &= (1+\epsilon)\cdot\algrew_{1:t} + \epsilon\cdot\minrate\cdot\bmax\cdot(\timeremaining_{t}-1) + (1+\epsilon)\cdot\advrew_{t+1} \\
    &= (1+\epsilon)\cdot\algrew_{1:t} + \epsilon\cdot\minrate\cdot(\remaining_{t} - \bmax) \pm \epsilon\cdot\minrate\cdot\bmin + (1+\epsilon)\cdot\advrew_{t+1} \\
    &= (1+\epsilon)\cdot\algrew_{1:t} + \epsilon\cdot\minrate\cdot(\remaining_{t} - \bmax + \bmin) + \advrew_{t+1} + \epsilon\cdot(\advrew_{t+1} - \minrate\cdot\bmin) \\
    &= \advrew_{1:t} - \eta + \advrew_{t+1} + 0 \\
    &= \advrew_{1:t+1} - \eta \\
    &< \advrew_{1:t+1},
\end{align*}
i.e. \Cref{eq:trivial-time-constraint} is violated at time $t+1$, which from Case 2 above, will result in violation of consistency.
Here, the second identity follows from \Cref{eq:time_tight}, the fourth from violation of \Cref{eq:non-trivial-time-constraint} at time $t$ in \Cref{eq:violate_non_trivial} and the fact that $\advrew_{t+1} = \minrate\cdot\bmin$.
\pagebreak

\end{document}